\crefname{hypothesis}{Hypothesis}{Hypotheses}
\title{Optimal Mixing in Transport Networks: Numerical Optimization and Analysis\thanks{Submitted to the editors on \today.
\funding{This work was funded by the National Science Foundation under grant no.~DMS--1351860.}}}
\author{Cassidy Mentus\thanks{Dept. of Mathematics, University of California, Los Angeles, CA 90095
  (\email{cassidy.mentus@gmail.com}, \email{mroper@math.ucla.edu}, \url{www.marcusroper.org}).}
\and Marcus Roper \footnotemark[2]}
\begin{document}

\maketitle

\begin{abstract}
Many foraging microorganisms rely upon cellular transport networks to deliver nutrients, fluid and organelles between different parts of the organism. Networked organisms ranging from filamentous fungi to slime molds demonstrate a remarkable ability to mix or disperse molecules and organelles in their transport media. Here we introduce mathematical tools to analyze the structure of energy efficient transport networks that maximize mixing and sending signals originating from and arriving at each node.  We define two types of entropy on flows to quantify mixing and develop numerical algorithms to optimize the combination of entropy and energy on networks, given constraints on the amount of available material. We present an in-depth exploration of optimal single source-sink networks on finite triangular grids, a fundamental setting for optimal transport networks in the plane.  Using numerical simulations and rigorous proofs, we show that, if the constraint on conductances is strict, the optimal networks are paths of every possible length. If the constraint is relaxed, our algorithm produces loopy networks that fan out at the source and pour back into a single path that flows to the sink. Taken together, our results expand the class of optimal transportation networks that can be compared with real biological data, and highlight how real network morphologies may be shaped by tradeoffs between transport efficiency and the need to mix the transported matter. \end{abstract}

\begin{keywords}
transport network, biological network, Murray's law, fluid flow, advection, dissipation, entropy, mixing, optimization\end{keywords}

\begin{AMS}
  	49Q10, 90C26, 92C15, 92C99, 94C15 
\end{AMS}

\section{Introduction}
\label{sec:intro}
Work by Murray in the 1920s \cite{murray1926physiological} first probed the idea that vessels in biological transportation networks may optimize knowable target functions. Murray hypothesized that blood vessels may have optimal radii are set by tradeoffs between the need to minimize friction within the vessel (which favors large vessels), and the energetic cost of maintaining the vessel (which penalizes large vessels). The scalings and geometric relationships that he derived from this trade-off have found some experimental support for the blood networks of animals \cite{zamir_arter_geom} and water transport networks of plants \cite{PlantTransportMcCullohAdler}. More recent theoretical work has extended the analysis of single vessels or branch points to whole networks of vessels in which the sources and sinks are prescribed but the network is given many choices for how to connect these points \cite{bohn2007structure,durand2007structure}, added damage or fluctuations in source and sink strengths \cite{katifori2010damage,corson2010fluctuations}, or developed models for how feedbacks between flows and network growth allow such optimal networks to be grown \cite{hu2013adaptation,ronellenfitsch2016global}.

Hundreds of thousands of species of microorganisms, including slime molds, water molds and fungi rely on internal transportation networks. These networks have similar functions -- they continuously grow as the organism claims territory or searches for hosts or resources. Within the network nutrients, fluid and cellular matter (including nuclei and other organelles) are transported from sites of production or uptake to sites of utilization. Minimization of friction, in conjunction with robustness to damage, appears to underlie features of some of the foraging networks made for example by wood rotting basidiomycete fungi \cite{BoddyTransportNetwork} and slime molds \cite{NakagakiAdaptive}. However, organisms build networks with a tremendous diversity of morphologies that can not be explained by friction minimization alone. Do these morphologies emerge from other physical principles besides minimizing friction, from constraints on the pathways used to grow the network, or from neutral differences in network morphology that do not affect the organism's fitness? We start from the position that to understand the extent of the role that optimization plays in determining the structure of networks, we must first understand what the optimal network is for a given target function. This approach previously guided us to develop gradient-descent methods for optimizing networks for arbitrary differentiable functions \cite{chang2019microvscular}.

In this work we focus on a quantity with many points of non-differentiability: the amount of mixing occurring within the network. This quantity, which is given two different quantifications below, is non-differentiable in the conductances of the network at any point where the flow in an edge goes to 0. Since the optimization of the network requires searching over possible topologies for flow; i.e. reversing the directions of flow on edges, we develop here a new numerical optimization method that is adapted to deal with this pervasive non-differentiability.

Why are might real networks seek to maximize mixing? Three kinds of mixing seem to be relevant to network-forming microorganisms:

1. In fungal networks cellular growth occurs at the periphery of the network through the continuous extension of hyphae at their tips, and in fast growing fungi, such as the model organism {\it Neurospora crassa} growth requires the continuous supply of nuclei and other organelles to the edge of the mycelium \cite{LewHyphaGrow}. Within {\it N. crassa} nuclei often take tortuous and multidirectional paths toward the tips, and the network is known to be organized so that pairs of nuclei that start close together within the mycelium are unlikely to be delivered to the same site of growth at the periphery, potentially to stop deleterious mutations accumulating in one region of the fungus \cite{RoperNuclear}.

2. Recent experiments in the dung fungus {\it Coprinopsis cinerea} show large swathes of the network responding to the external threat of predatory nematode worms. When nematode grazing is detected in one part of the fungal network, a suite of defense chemicals is expressed, not just at the site of grazing, but spreading in multiple directions through the network \cite{Nematoxic}. Spreading out nematoxin production may prepare other parts of the network for further attacks or enable the cost of labor to be spread through the network \cite{RoperCurrBiology}.

3. Plasmodial slime molds, such as {\it Physarum polycephalum} live in heterogeneous environments containing patches of nutrients \cite{AlimPhysarumPrimer}. The network remodels globally when it discovers a new nutrient source, and it is thought that individual tubes in the network respond to a cue carried within the flow carried within the network \cite{alim2017mechanism}. A global response to this cue requires that it be dispersed through the entire network.

We model the signals within the network as being passively transported by the flows. In Section \ref{sec:entropy} we define an entropy of mixing of the transported signals in a flow network. In Section \ref{sec:numerics} we describe a numerical method for choosing the conductances within the network, and in Section \ref{sec:results} we show simulation results. A highlight result is that for small values of the parameter, $\gamma$, which represents the penalty of dividing one edge into two, the optimal networks become a set of paths linking source and sink. We prove why paths are favored, and analytically expose the set of possible path optima in Section \ref{sec:theory}.

\section{Mathematical model and mixing entropies}
\label{sec:entropy}

\subsection{Movement of signals through a flow network}
\label{sec:movement}
Our mathematical model for the biological transportation network consists of a network (graph) with nodes (vertices)  $\mathcal{N}$, enumerated $1,2,\ldots,N$, and edges $\mathcal{E}$. The nodes are arranged on a regular triangular lattice, so that each vertex in the interior is linked by equal length edges to 6 neighbors (we write $n(i)$ for the set of neighbors of $i$). The conductance of the edge $(i,j)$ is denoted by $\kappa_{ij}$. Fluid (protoplasm) is continually pushed through the network by pressure differences between the nodes. In our model the ultimate origin of these pressure differences are flows into and out of the network via diametrically opposite nodes. Signals are carried along by this bulk flow of fluid.
\begin{figure}
\includegraphics[width=1\textwidth]{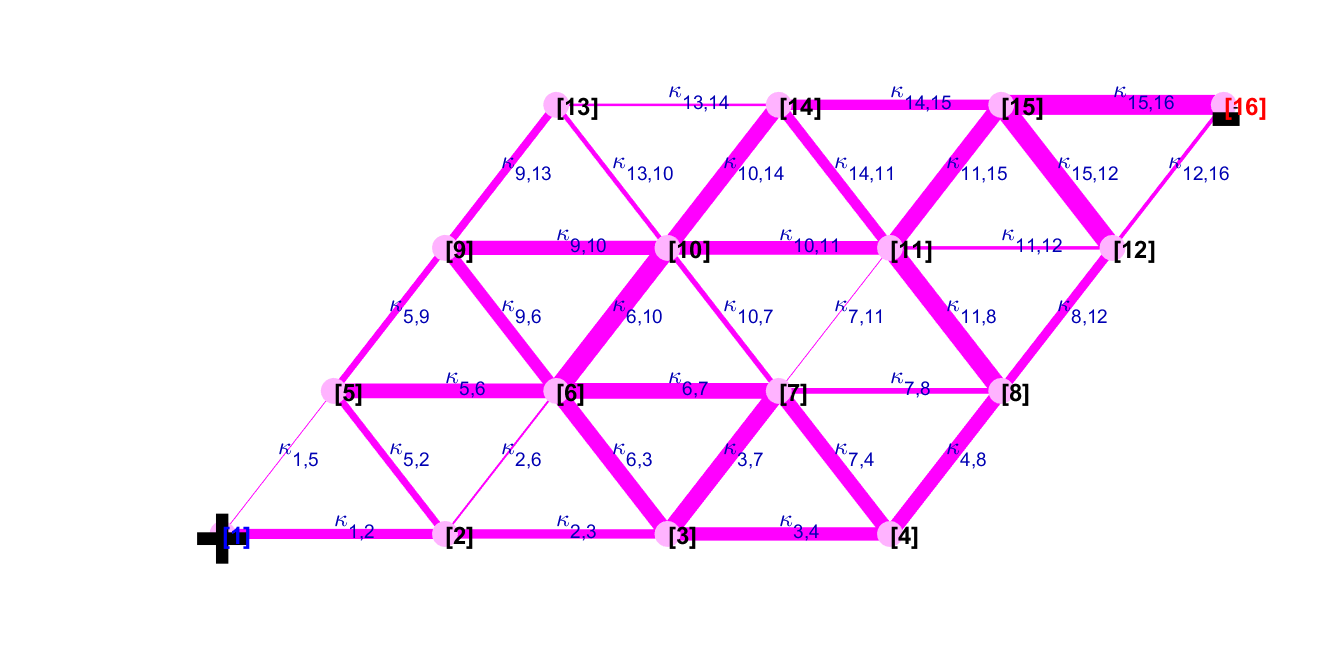}
\caption{Conductance network $\kappa_{ij}$ with each node and edge conductance labeled. The source is at node $1$, the bottom-left corner, and the sink is at node $25$, the top right corner.}
\end{figure}
\begin{definition}
The rate of fluid entering or exiting the network at $i\in \mathcal{N}$ is the \textbf{boundary flow at node $i$}, and is denoted $Q_{i}$.
\end{definition}
Boundary flow $Q_{i}>0$ corresponds to fluid entering the network through node $i$ (i.e. the node is a source), and $Q_{i}<0$ corresponds to fluid exiting the network at node $i$ (i.e. the node is a sink).  The total volume of fluid contained in the network is constant, so total inflows and outflows must be balanced: $\sum_{i}Q_{i}=0$. The boundary flows in turn engender flows, $q_{ij}$, on the edges. Flows must also be balanced on each node in the network, a fact that is known as Kirchhoff's first law of circuits:
\begin{definition}
A flow is called \textbf{compatible} with regards to the boundary
flows $Q_{i}$ if $\sum_{j\in n(i)}q_{ij}=Q_{i}$ for all $i\in\mathcal{N}$.
\end{definition}
For any set of boundary flows, there are typically multiple compatible flows on the network. The flow we are interested in, called the \textbf{physical flow}, is the unique compatible flow that minimizes the dissipation:
\begin{definition}
For a conductance network $\kappa_{ij}$, the \textbf{dissipation
$\mathcal{D}$ } from flows $q_{ij}$ is the rate at which work must be done to maintain the fluid flows on all edges of the network: $\mathcal{D}(q_{ij})=\sum_{ij}\frac{q_{ij}^{2}}{\kappa_{ij}}$
. \label{def:dissipation}
\end{definition}
The flow that minimizes the dissipation can be derived from Kirchhoff's first and second laws \cite{chang2018minimal}, which introduce a pressure variable that is defined on each node in the network:
\begin{proposition} {\bf Kirchoff's second law for circuits} \label{thm:formulaforflows}Let $\kappa_{ij}$ be a connected conductance
network with nodes $\mathcal{N}$ and edges $\mathcal{E}$. Let $Q$ be boundary flows such that $\sum_{i}Q_{i}=0$. Let $q_{ij}$ be the physical flows
of this network. Then there exists $p_{i}\in\mathbb{R}$, called the \textbf{\emph{pressure}} at
node $i$, such that $q_{ij}=\kappa_{ij}(p_i-p_j)$. \label{prop:Kirchhoff2}
\end{proposition}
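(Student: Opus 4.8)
The plan is to prove that the dissipation-minimizing compatible flow can be written in the claimed gradient form $q_{ij} = \kappa_{ij}(p_i - p_j)$ by treating this as a constrained optimization problem and invoking Lagrange multipliers. The dissipation $\mathcal{D}(q_{ij}) = \sum_{ij} q_{ij}^2/\kappa_{ij}$ is a strictly convex quadratic in the edge flows, and the compatibility requirement $\sum_{j \in n(i)} q_{ij} = Q_i$ imposes one linear constraint per node. I would first set up the Lagrangian $\mathcal{L} = \sum_{ij} q_{ij}^2/\kappa_{ij} - \sum_i \lambda_i\bigl(\sum_{j\in n(i)} q_{ij} - Q_i\bigr)$, where the multipliers $\lambda_i$ will turn out to be the pressures $p_i$.

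Let me think carefully about the orientation bookkeeping here. The flows are antisymmetric, $q_{ij} = -q_{ji}$, and the conductances symmetric, $\kappa_{ij} = \kappa_{ji}$, so each edge $(i,j)$ contributes its squared-flow term once to $\mathcal{D}$ but appears in the constraints at both endpoints $i$ and $j$, with opposite sign. The key step is to differentiate $\mathcal{L}$ with respect to a single oriented edge variable $q_{ij}$: the dissipation term contributes $2q_{ij}/\kappa_{ij}$, while the constraint at node $i$ contributes $-\lambda_i$ and the constraint at node $j$ contributes $+\lambda_j$ (because $q_{ij}$ enters node $j$'s balance as $q_{ji} = -q_{ij}$). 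Setting $\partial\mathcal{L}/\partial q_{ij} = 0$ then yields $2q_{ij}/\kappa_{ij} - \lambda_i + \lambda_j = 0$, i.e. $q_{ij} = \tfrac{1}{2}\kappa_{ij}(\lambda_i - \lambda_j)$. Absorbing the factor of $\tfrac12$ into the definition of the pressures, $p_i := \lambda_i/2$, gives exactly $q_{ij} = \kappa_{ij}(p_i - p_j)$.

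The remaining tasks are to justify that the stationary point is actually the minimizer and that it exists. Existence and uniqueness of the minimizing compatible flow follow from strict convexity of $\mathcal{D}$ on the affine subspace of compatible flows, which is nonempty because the network is connected and $\sum_i Q_i = 0$ (so the flows can be routed), together with coercivity of $\mathcal{D}$ on that subspace; since $\mathcal{D}$ is convex, the KKT stationarity conditions are both necessary and sufficient for the global minimum, so the Lagrange-multiplier solution is the physical flow. I would also remark that the pressures $p_i$ are determined only up to an additive constant (adding a constant to every $p_i$ leaves all the differences $p_i - p_j$ unchanged), which is consistent with the Laplacian system $\sum_{j \in n(i)} \kappa_{ij}(p_i - p_j) = Q_i$ having a one-dimensional kernel spanned by the constant vector.

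I do not expect any genuine analytic obstacle here, since this is the classical derivation of Ohm's/Kirchhoff's laws from a variational principle; the statement is essentially a restatement of the first-order optimality conditions. The one place demanding care is the antisymmetry/orientation convention: one must be consistent about whether each edge is counted once or twice in $\mathcal{D}$ and track the signs with which $q_{ij}$ enters the two nodal constraints, as an error there would produce a spurious factor of two or an incorrect sign. Accordingly, the main ``obstacle'' is really bookkeeping rather than mathematics, and the heart of the proof is the single-line stationarity computation above.
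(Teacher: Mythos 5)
Your proposal is correct, and it is the standard variational derivation that the paper itself leans on: the paper offers no in-text proof of this proposition, instead deferring to the citation \cite{chang2018minimal}, where the physical (dissipation-minimizing) flow is characterized exactly by this Lagrange-multiplier computation. Your bookkeeping is right: with each edge counted once in $\mathcal{D}$ and the antisymmetry $q_{ij}=-q_{ji}$ feeding $q_{ij}$ into the two nodal constraints with opposite signs, stationarity gives $q_{ij}=\tfrac{1}{2}\kappa_{ij}(\lambda_i-\lambda_j)$, and the factor $\tfrac{1}{2}$ is harmlessly absorbed into the definition of $p_i$ (had you counted each edge twice in $\mathcal{D}$ you would absorb a $\tfrac{1}{4}$ instead; the convention only rescales the multipliers). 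Two small points are worth making explicit: existence of the multipliers at the constrained minimum is automatic because the constraints are affine (no further constraint qualification is needed), and the nodal constraints are linearly dependent --- they sum to zero precisely because $\sum_i Q_i=0$ --- which is why the $\lambda_i$, hence the pressures, are determined only up to a single additive constant on the connected network; this matches the paper's own remark following Eq.~(\ref{eqn:Poiss_eqn}) that pressures are solvable up to one additive constant per connected component.
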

We can compute the pressures by defining a vector of pressures $\mathbf{p}=\left\{p_i\right\}_{i\in\mathcal{N}}$, a vector of boundary flows $\mathbf{Q}=\left\{Q_i\right\}_{i\in\mathcal{N}}$ and the network Laplacian $\Delta_\kappa$, a $|\mathcal{N}|\times |\mathcal{N}|$ matrix with entries \begin{equation}
\Delta_{\kappa,ij}=\left\{
\begin{array}{cc} 
-\kappa_{ij} & \hbox{if~} i\not=j \\ 
\sum_{j}\kappa_{ij}& \hbox{if~} i=j\end{array}\right.\label{eqn:Laplacian}
\end{equation}
Then conservation of mass at each node (Proposition \ref{prop:Kirchhoff2}) is equivalent to solving \begin{equation}
\Delta_{\kappa}\mathbf{p} = \mathbf{Q}.\label{eqn:Poiss_eqn}
\end{equation}
So long as every connected component of a physical network has one node with a defined pressure, the pressures are uniquely solvable, otherwise they are solvable up to a single additive constant per connected component\cite{chang2018minimal}. When the conductance network is connected and the pressure at node $i$ is known: $p_i = P$, we add $P$ to $Q_i$ and construct the invertible matrix $\tilde{\Delta}_{\kappa}$ by adding $1$ to $\Delta_{\kappa,ii}$.

We now consider the mixing produced by the flows within the network. Our flow network model contains all of the scenarios for mixing described in Section \ref{sec:intro}. The signals passing through the network could represent genetically diverse nuclei (scenario 1), or chemical cues (scenarios 2 and 3). Our model does not need to represent the entire network, it could represent the portion of network that supplies a single hyphal tip. This supply network would be linked to supply networks for other tips, and acquires signals, randomly at each node from these other networks\footnote{Signals can be transferred between supply networks without flow between them, since motor protein trafficking (of nuclei) or diffusion (of chemical cues) provide alternate transport mechanisms.}. Signals are made up of blobs: either molecules or organelles. We compile a list of the nodes visited by each signal blob: call the $t$-th node visited by a signal, $x_t$. Then $x_t$ is a random walk, with transition probability:
\begin{equation}
T_{ij}  \equiv  P(x_{t+1}=j|x_{t}=i)=\frac{q_{ij}}{\sum_{k\in n(i):\,q_{ik}>0}q_{ik}-Q_{i}\boldsymbol{1}_{Q_{i}<0}}.
\end{equation}
that is, the flow of signal from $i$ to $j$ is simply proportional to the total flow along the edge $(i,j)$. Effectively we assume that signal is uniformly dispersed in the flowing protoplasm, ignoring any physical effects such as diffusion \cite{MarbachAlimPruning4Taylor} that move signals independently of flows. When the signal reaches a sink node it may exit the modeled network (with the exit probability proportional to $-Q_i$, so $\sum_{j}T_{ij}\leq 1$).

A signal introduced at node $i$ travels along the network following the flow. At each node with multiple possible outward flows, the signal chooses one outflow probabilistically. Signals therefore perform a type of random walk down the pressure gradient. There are two senses in which signals may be considered to mix on the network: 1. Given the node $i$ at which it originates we are interested in the number of nodes that the signal visits before exiting the network. 2. Alternately, given a node $j$, we are interested in the number of different sites of origin that signals passing through $j$ may have. To quantify either form of diversity, we must focus on the probability that a signal originating at node $i$ ever visits a node $j$ defined by:
\begin{equation}
P_{ij}  =  P(x_{t}=j~\textrm{ for some }t\geq0~|~x_{0}=i).
\end{equation}
The entries of $P_{ij}$ from a $N\times N$ matrix. To calculate this matrix from the transition probabilities $T$, note that the probability of getting from $i$ to $j$ by following exactly $n$ edges is $(T^n)_{ij}$. Hence,
$P = \sum_{n=0}^{N-1} T^n$ (note that $T^N=0$, because a signal can visit at most $N$ nodes before exiting the network and signals can not visit the same node twice). Alternatively by summing the geometric progression:
\begin{equation}
P = (I - T)^{-1}~,
\end{equation}
where $I$ is the $N\times N$ identity matrix.

\subsection{Defining mixing entropies}

We define two types of information entropy on the flows $q_{ij}$. The first is a measure of the accumulation of signals at every node in the network and the second represents the dispersal of signals throughout the network. We call the two entropies, respectively, {\bf total receiver entropy} (or {\bf total mixing entropy}) and {\bf total sender entropy}. Let $f_{i}$ be the total flow through node $i$, i.e. $f_{i}=\sum_{j\in n(i):q_{ij}>0}q_{ij}+Q_{i}\boldsymbol{1}_{Q_{i}>0}$. The rate at which fluid flows from $i$ to $j$ is then $\tilde{q}_{ij}=P_{ij}f_{i}$. We refer to this as the flow from $i$ to $j$. We assume that the rate at which a signal is produced at a node is proportional to the total flow through that node. This assumption certainly makes sense if our signal consists of new nuclei that are generated by divisions within the protoplasm, since the flow through a node will be proportional to the rate at which nuclei pass through it. For other signal production scenarios (such as when the signal is produced in response to predation), we can arrive at this assumption if we assume that product of the new signal is rate-limited by a component that is contained within the protoplasm, so signal production rate is proportional to rate of protoplasm cleared through the node in unit time. Under this assumption the relative proportions of signals received at node $j$ from upstream nodes $i$ are the same as the relative proportions of $\tilde{q}_{ij}$. 

In our model each site in the network can send signals to other sites in the network, and any point in the network may potentially receive signals from any other point. We cannot tell ahead of time which nodes will provide the useful signals, so we consider all nodes as possible sources of signals. We also make no assumption about sites where diversity needs to be maximized (this is in contrast to \cite{RoperNuclear}, in which genetic diversity was considered only at hyphal tips), so we consider all of the possible nodes that signals can reach within the network when computing the mixing entropy.

To compute the entropy of the distribution of signals arriving at $i$ we define the probability distribution on up-stream nodes of $i$: 
\begin{equation}
\mathcal{P}_{i}(j)  =  \frac{\tilde{q}_{ji}}{N_i}~~\hbox{where}~~N_i\equiv \sum_{j:\tilde{q}_{ji}>0}\tilde{q}_{ji}~,
\end{equation}
effectively forming a new matrix from $\tilde{q}_{ij}$ in which all columns are normalized to sum to 1. We may define the \textbf{local receiver entropy at node $i$ }as the Shannon information entropy of $\mathcal{P}_i$: $H(\mathcal{P}_{i})=-\sum_{j}\mathcal{P}_{i}(j)\log\left(\mathcal{P}_{i}(j)\right)$. We consider the total flow through $i$ as a measure of the ``importance'' of the node \cite{izsak2007parameter}. In our model, the diversity of signals is more important at high traffic nodes than at low traffic nodes. This principle is useful mathematically, since it ensures that rearrangements of very low conductance edges don't greatly affect the overall mixing associated with a network. At the same time, the weighting is intended to reflect the relative biological importance of nodes within the network -- a node with
high flow supplies a greater volume of cytoplasm to the rest of the network, so it is more important that all of the signals (whether
cues or nucleotypes) are present at the node. Hence the total receiver entropy is:
\begin{equation}
H =  \sum_{i}f_{i}H(\mathcal{P}_i).
\end{equation}
Similar to \cite{tanyimboh1993calculating} $H$ represents the conditional entropy associated with choosing a {\it receiving node} at random with probability proportional to $f_i$ and then conditioned on our choice of node $i$ we chose a sending node at random via the distribution $\mathcal{P}_{i}$. 

\subsection{Set restrictions of the entropy}

Our proofs in Section \ref{sec:theory} often require that we partition $\mathcal{N}$ into subsets of nodes. It is convenient to be able to evaluate the contributions of each subset to the total network entropy. We define restricted entropies for subsets $\mathcal{F}\subset\mathcal{N}$ as follows: For all $i\in\mathcal{F}$ define $\mathcal{P}_{\mathcal{F}i}(j)=\frac{\tilde{q}_{ji}}{\sum_{k\in\mathcal{F}}\tilde{q}_{ki}}$
if $j\in\mathcal{F}$ and $\mathcal{P}_{\mathcal{F}i}(j)=0$ otherwise.
\begin{defn}
The local negative mixing entropy restricted to $\mathcal{F}$ is
defined to be 
\begin{equation}
H(\mathcal{P}_{\mathcal{F}i}) =- \sum_{j\in\mathcal{F}:\tilde{q}_{ji}>0}\mathcal{P}_{\mathcal{F}i}(j)\log\left(\mathcal{P}_{\mathcal{F}i}(j)\right)
\end{equation}
 and the total mixing entropy restricted to $\mathcal{F}$
is 
\begin{equation}
    H_{\mathcal{F}}  =  \sum_{i\in\mathcal{F}}f_{i}H(\mathcal{P}_{\mathcal{F}i}) ~.
\end{equation}
\end{defn}
%

\subsection{Sending entropy on flows}

It may also be important for the network to spread out signals to as many downstream nodes as possible. We define an entropy for the places that can be reached by a new signal originating at a node within the network. Specifically, instead of taking the mass distribution of incoming flows and normalizing them to a probability distribution, we use the out-going flows. That is
we define the probability distribution of nodes that can be reached by a signal introduced at node $i$:
\begin{equation}
\mathcal{P}'_{i}(j)  =  \frac{\tilde{q}_{ij}}{\sum_{j:\tilde{q}_{ij}>0}\tilde{q}_{ij}}.
\end{equation}
This is equivalent to normalizing the matrix $\boldsymbol{\tilde{q}}$ so that all rows sum to 1. We define the \textbf{local sending entropy at node $i$} to be
the Shannon information entropy \cite{shannon1948mathematical} of the distribution $\mathcal{P}'_i$: 
\begin{equation}
H(\mathcal{P}'_{i})  =  -\sum_{j}\mathcal{P}'_{i}(j)\log(\mathcal{P}'_{i}(j))~.
\end{equation}
and we define the total sending entropy of the entire network to be the weighted sum of the node entropies:
\begin{equation}
H'  =  \sum_{i}f_{i}H(\mathcal{P}'_{i})~.
\end{equation}

\subsection{Equivalence of receiving and sending entropies}

Although the entropies $H$ and $H'$ offer alternate representations of the mixing that occurs within the network, they are linked by an equivalence principle:
\begin{theorem}
Let $q_{ij}$ be a flow network compatible with boundary flows $Q_{i}$. Let $q'_{ij}$ and $Q'_{i}$ be the flow network and boundary flows
obtained from $q_{ij}$ and $Q_{i}$ by reversing the flows, i.e. $q'_{ij}=-q_{ij}$ and $Q'_{i}=-Q_{i}$. Then $H'(q_{ij})=H(q'_{ij})$
\label{thm:NSE=NME}.
\end{theorem}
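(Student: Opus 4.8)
The plan is to exhibit the equivalence as the time-reversal of the absorbing random walk generated by $T$. Write $w_{ij}=\max(q_{ij},0)$ for the one-directional flow carried by edge $(i,j)$, and let $q'_{ij}=-q_{ij}$, $Q'_i=-Q_i$ denote the reversed network, with its derived objects $T'$, $P'$, $f'_i$, $\tilde q'_{ij}$. Reversal transposes edge flows, $w'_{ij}=\max(-q_{ij},0)=w_{ji}$, and interchanges sources with sinks. The argument rests on two properties of the node weight $f_i$: by compatibility (Kirchhoff's first law) the total inflow and total outflow at each node agree, so $f_i$ — the total flow through node $i$, which is exactly the quantity normalizing $T_{ij}$ — is invariant under reversal, $f'_i=f_i$. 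I would record at the outset that both walks are acyclic: since $q_{ij}=\kappa_{ij}(p_i-p_j)$ by Proposition~\ref{prop:Kirchhoff2}, a signal strictly descends the pressure $p$ under $T$ and strictly ascends it under $T'$; hence $T$ and $T'$ are nilpotent and $P=(I-T)^{-1}$, $P'=(I-T')^{-1}$ are well defined.

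\emph{Reduction.} I would reduce the theorem to a single identity, namely that the signal-flow matrix transposes under reversal,
\[ \tilde q'_{ij}=\tilde q_{ji}. \]
Granting this, the receiver distribution of the reversed network at node $i$ is $\tilde q'_{ji}/\sum_k\tilde q'_{ki}=\tilde q_{ij}/\sum_k\tilde q_{ik}$, which is precisely the sender distribution $\mathcal P'_i$ of the original network; the two supports agree because $\tilde q'_{ji}>0\iff\tilde q_{ij}>0$. Thus the local receiver entropy of $q'$ at $i$ equals the local sender entropy $H(\mathcal P'_i)$ of $q$ at $i$. Weighting by the common factor $f_i=f'_i$ and summing gives $H(q')=\sum_i f_i H(\mathcal P'_i)=H'(q)$, with zero-throughput nodes contributing nothing to either side.

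\emph{Key step.} The heart of the proof is the transpose identity, which I would obtain from the discrete time-reversal relation for the transition matrix. Since the denominator normalizing $T_{ij}$ is the total through-flow $f_i$, one has $f_jT_{ji}=w_{ji}$ and $f_iT'_{ij}=w'_{ij}=w_{ji}$, so $f_iT'_{ij}=f_jT_{ji}$; in matrix form $T'=D^{-1}T^{\top}D$ with $D=\diag(f_i)$. Conjugation commutes with inversion, giving $P'=(I-T')^{-1}=D^{-1}(I-T^{\top})^{-1}D=D^{-1}P^{\top}D$, i.e.\ $f_iP'_{ij}=f_jP_{ji}$. Hence $\tilde q'_{ij}=P'_{ij}f'_i=(f_j/f_i)P_{ji}\,f_i=f_jP_{ji}=\tilde q_{ji}$, as required.

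\emph{Main obstacle.} The delicate point is justifying the conjugation $T'=D^{-1}T^{\top}D$ and pushing it through the inverse. This hinges on the nontrivial fact that the denominator normalizing $T$ (total outflow) and the denominator normalizing $T'$ (total outflow of the reversed network, i.e.\ total inflow of the original) coincide at every node with the \emph{same} weight $f_i$ — exactly conservation of mass — so that a single diagonal $D$ simultaneously conjugates both the numerators $w_{ij}\mapsto w_{ji}$ and the normalizers. I would also take care over the support bookkeeping, since the entropy sums range only over strictly positive entries and nodes with $f_i=0$ must be seen to drop out of both entropies. The acyclicity forced by the pressure potential is what guarantees $I-T$ and $I-T'$ are invertible, legitimizing the geometric-series identity $P=(I-T)^{-1}$ on which the reduction depends.
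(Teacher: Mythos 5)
Your proof is correct, and its skeleton matches the paper's: both arguments establish $f'_i=f_i$ from conservation of mass, reduce the theorem to the transpose identity $\tilde q'_{ij}=\tilde q_{ji}$, and derive that identity from the detailed-balance relation $f_iT'_{ij}=f_jT_{ji}$. Where you genuinely diverge is in how detailed balance is propagated from $T$ to $P$: the paper works path by path, multiplying the probability of a signal path $x_0,\ldots,x_T$ by the throughput $f_{x_0}$, telescoping the ratios $f_{x_t}/f_{x_{t+1}}$ to identify it with the reversed path's probability weighted by $f_{x_T}$, and then summing over all paths from $i$ to $j$; you instead encode detailed balance as the conjugation $T'=D^{-1}T^{\top}D$ with $D=\diag(f_i)$ and push it through the resolvent, $P'=(I-T')^{-1}=D^{-1}P^{\top}D$, so that $f_iP'_{ij}=f_jP_{ji}$ falls out in one line. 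Your linear-algebraic route buys two things the paper leaves implicit: an explicit justification that $I-T$ and $I-T'$ are invertible (nilpotency of $T$, forced by strict descent of the pressure), and an explicit flag of the $f_i=0$ degeneracy — though note that $D^{-1}$ only exists after you actually restrict all matrices to the support $\{i:f_i>0\}$, a step you acknowledge but should carry out (zero-throughput nodes are never visited and carry zero weight in both entropies, so deleting their rows and columns loses nothing); the paper's path-sum silently sidesteps this because paths never enter zero-flow nodes. The paper's route, in exchange, makes the probabilistic content transparent: throughput-weighted path probabilities are literally invariant under time reversal. One caveat on your acyclicity argument: the theorem is stated for an arbitrary flow compatible with the boundary flows, and compatible flows can contain circulations, whereas strict pressure descent via $q_{ij}=\kappa_{ij}(p_i-p_j)$ holds only for physical flows; invoking Proposition \ref{prop:Kirchhoff2} therefore tacitly strengthens the hypothesis. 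In context this is harmless — the paper's own definition of $P$ as the finite sum $\sum_{n=0}^{N-1}T^n$ with $T^N=0$ already presupposes acyclic flows — but it deserves a remark rather than silent reliance.
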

\begin{proof}
Notice that reversing the flows doesn't affect the flow strengths of nodes within the network because 
\begin{eqnarray} 
f_i=\sum_{j:q_{ij}>0}q_{ij}+|Q_{i}|\boldsymbol{1}_{Q_{i}<0}&=&\sum_{j:q_{ij}<0}|q_{ij}|+|Q_{i}|\boldsymbol{1}_{Q_{i}>0} \\
&=&\sum_{j:q'_{ij}>0}|q'_{ij}|+|Q'_{i}|\boldsymbol{1}_{Q'_{i}<0}~.
\end{eqnarray}
The equivalence principle boils down to proving the statement $\tilde{q}_{ij}=\tilde{q}'_{ji}$ where $\tilde{q}'_{ij}$ is the flow from node $i$ to node $j$ in the flow-reversed network. We derive this equality by comparing the probability of a signal path $x_{t}$: $t=0,1,\ldots,T$ under the flow $q_{ij}$ with the probability of the reversed path $x'_t \equiv x_{T-t}$: $t=0,1,\ldots ,T$ under the reversed flow $q'_{ij}$:
%
since $|q_{ij}|=|q'_{ij}|$ it follows that $T_{ij}f_{i}=T'_{ji}f_{j}$
and so $T'_{ji}=\frac{f_{i}}{f_{j}}T_{ij}$. We multiply the probability
of the path $x_t$ by the strength of the starting node, $f_{x_{0}}$, to obtain $f_{x_{0}}\prod_{t=0}^{T-1}T_{x_{t}x_{t+1}}$, and rewrite via a telescoping product:
\begin{equation}
f_{x_{0}}\prod_{t=0}^{T-1}T_{x_{t}x_{t+1}}  =  f_{x_{T}}\prod_{t=0}^{T}\frac{f_{x_{t}}}{f_{x_{t+1}}}T_{x_{t}x_{t+1}} =  f_{x_{T}}\prod_{t=0}^{T}T'_{x_{t+1}x_{t}} =  f_{x_{T}}\prod_{t=0}^{T}T'_{x'_{t}x'_{t+1}}.
\end{equation}
For any nodes $i$ and $j$ in the network, we can sum over the
probability of all possible paths $i$ to $j$ in the regular network and $j$ to $i$ in the flow-reversed network to obtain: $\tilde{q}_{ij}=f_{i}P_{ij}=f_{j}P'_{ji}=\tilde{q}'_{ji}$. Hence the distributions
$\mathcal{P}'_{i}(j)$ for the flow network $q_{ij}$ are equal to
the distributions $\mathcal{P}_{i}(j)$ for the network $q'_{ij}$, so $H_i'(q_{ij}) = H_i(q'_{ij})$ leading to the required result.
\end{proof}
The physical flow on the network (see Proposition \ref{prop:Kirchhoff2}) can be reversed by reversing the
sources and sinks in the network; that is, replacing a source with
inflow $Q_{i}$ by a sink with outflow $Q_{i}$, and conversely. In the cases that we will analyze in this paper, the sources and sinks are matched in number and strength (e.g. a single source and single sink at opposite corners of a square grid network); so a network that optimizes receiving entropy can be transformed into a network that optimizes sending entropy simply by rotating the source into the sink and conversely. For this reason, we do not have to develop separate results for the two entropies. We focus on analyzing the receiving entropy, which we refer to simply as mixing entropy henceforth.

\subsection{Mathematical formulation of the optimization problem}

Building and using flow networks requires energy investment; an organism's optimal network will reflect tradeoffs between mixing effectiveness and the cost of the network. The cost has two components: each edge in the network must be built and maintained, and the fluid transported within the network dissipates energy
due to friction. The two cost components play slightly different roles in our optimization, we incorporate the first cost via a holonomic constraint, and the second via a penalty.

Murray \cite{murray1926physiological} posited that the cost of a maintaining a vessel is either proportional to its volume or surface area. Since all of the vessels in our networks have the same length, and the Hagen-Poiseuille law states that conductance is proportional to the fourth power of the radius, these scenarios correspond respectively to the cost of an edge being proportional to $\kappa_{ij}^{1/2}$ or to $\kappa_{ij}^{1/4}$. We constrain the cost the total material available to the network, requiring $\sum \kappa_{ij}^\gamma=C$ where $0<\gamma<1$ is a constant, whose effect on network morphology will be explored \cite{bohn2007structure,chang2019microvscular,FrickerSlimeMold}.

To incorporate the cost of dissipation in our optimization, we formulate it as a minimization problem:
\begin{equation}
\hbox{Find:~} \arg\min \{\Theta(\kappa_{ij}) \equiv -H(\kappa_{ij})+cD(\kappa_{ij})~:~~\kappa_{ij}\geq 0 ~\forall (i,j)\in\mathcal{E}~,~\sum \kappa_{ij}^\gamma = C\}. \label{eq:optimfn}
\end{equation}
We refer to $\Theta$ as the {\bf mixing-dissipation cost} (abbreviated: {\bf CMD}). Since the set of allowed conductances is compact, we know that the minimizer exists. The constant $c$ represents the relative priority to the network of minimizing dissipation over maximizing mixing. Along with $\gamma$ it is one of the main parameters that we explore in this work. We stretch our notation to refer to the minimum value of $\Theta$ for a given value of $c$ as $\Theta(c)$. 
\begin{lemma}
The minimal mixing-dissipation cost is a concave function of $c$.
That is, for $c_{1},c_{2}\geq0$: $\Theta(tc_{1}+(1-t)c_{2})\geq t\Theta(c_{1})+(1-t)\Theta(c_{2})$
for all $0\leq t\leq1$.
\end{lemma}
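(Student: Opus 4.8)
The plan is to exploit the fact that, once the conductances $\kappa_{ij}$ are fixed, the objective $\Theta(\kappa_{ij}) = -H(\kappa_{ij}) + c\,D(\kappa_{ij})$ depends on $c$ only through the affine expression $c \mapsto -H(\kappa_{ij}) + c\,D(\kappa_{ij})$, since both $H(\kappa_{ij})$ and $D(\kappa_{ij})$ are constants with respect to $c$. The minimal cost $\Theta(c)$ is therefore the pointwise infimum, over the feasible set of conductances, of a family of affine functions of $c$, and such an infimum is always concave. So the entire argument reduces to recording this observation with a direct verification of the defining inequality.

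First I would fix $c_1, c_2 \geq 0$ and $t \in [0,1]$, and set $c = tc_1 + (1-t)c_2$. Because the feasible set $\{\kappa_{ij} \geq 0 : \sum \kappa_{ij}^\gamma = C\}$ is compact (as already noted after the statement of the optimization problem (\ref{eq:optimfn})), the minimum defining $\Theta(c)$ is attained; let $\kappa^*$ be a minimizer at the parameter value $c$. Then I would expand
\begin{align}
\Theta(c) &= -H(\kappa^*) + c\,D(\kappa^*) \nonumber \\
&= -H(\kappa^*) + \bigl(tc_1 + (1-t)c_2\bigr)D(\kappa^*) \nonumber \\
&= t\bigl(-H(\kappa^*) + c_1 D(\kappa^*)\bigr) + (1-t)\bigl(-H(\kappa^*) + c_2 D(\kappa^*)\bigr).
\end{align}

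The final step is to bound each bracketed term from below by the corresponding minimal value. Since $\kappa^*$ is feasible, the quantity $-H(\kappa^*) + c_1 D(\kappa^*)$ is the value of the objective at $\kappa^*$ for parameter $c_1$, hence is at least the minimum $\Theta(c_1)$; the same holds for $c_2$. Combining these with the expansion above and using $t, 1-t \geq 0$ gives $\Theta(c) \geq t\,\Theta(c_1) + (1-t)\,\Theta(c_2)$, which is exactly the claimed concavity. There is no real obstacle here: the only point requiring care is the existence of the minimizer $\kappa^*$, which is supplied by compactness of the feasible set (alternatively, one could phrase $\Theta(c)$ as an infimum and the same inequality follows by taking infima, avoiding any appeal to attainment).
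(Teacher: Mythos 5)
Your proof is correct and is essentially the paper's own argument: both take a minimizer $\kappa^*$ at the combined parameter $c_3 = tc_1+(1-t)c_2$, write the objective at $\kappa^*$ as the convex combination $t\bigl(-H(\kappa^*)+c_1D(\kappa^*)\bigr)+(1-t)\bigl(-H(\kappa^*)+c_2D(\kappa^*)\bigr)$, and bound each bracket below by $\Theta(c_1)$ and $\Theta(c_2)$ since $\kappa^*$ is feasible. Your framing via the general fact that a pointwise infimum of affine functions of $c$ is concave, and your remark that attainment is not even needed, are harmless refinements of the same idea.
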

\begin{proof}
Set $c_3=tc_1+(1-t)c_2$, and let $\kappa_i$ be a minimizer of $-H+c_iD$; $i=1,2,3$. Then
\begin{eqnarray}
-tH(\kappa_{1})+tc_{1}D(\kappa_{1}) -(1-t)H(\kappa_{2})+(1-t)c_{2}D(\kappa_{2}) & \leq & -tH(\kappa_{3})+tc_{1}D(\kappa_{3})\nonumber \\ 
& & -(1-t)H(\kappa_{3}) \nonumber\\
& &+(1-t)c_{2}D(\kappa_{3})\nonumber \\
& = & -H(\kappa_{3})\nonumber \\
& &+\left(tc_{1}+(1-t)c_{2}\right)D(\kappa_{3}).
\end{eqnarray}
\end{proof}

To finish formulating the optimization problem we restrict the set of network topologies that we are searching over: Let $G$ be an unweighted undirected network with nodes $\mathcal{N}$ and edges $\mathcal{E}$. Choosing $\mathcal{E}$ allows us to constrain e.g. the maximum degree of the nodes in our optimal network. Our optimal network is restricted to be a subnetwork of $G$: we refer to $G$ as the {\bf ambient network}. For the purposes of this study we will assume that the network is planar (this assumption is almost certainly true for slime mold networks, but is less valid in fungal networks, where hyphae often crossover without connecting to each other). In this paper we restrict to regular triangular networks, in which all of the edges in the ambient network have the same length. We do not think that our results are sensitive to the choice of (regular) ambient network: we have for example, reproduced all of the results discussed in this paper with square grid ambient networks \cite{PhDThesis}.

\subsection{Invariance of optima to changing the material investment in the network} \label{sec:Cinvariance}

Solutions of our optimization problem depend upon the value of $C=\sum \kappa_{ij}^\gamma$. However, as $c$ is increased from $0$ to $\infty$, the same sequences of optimal networks are found, independent of $C$. For suppose that $\kappa_{ij}$ is a conductance network that solves Eq. \ref{eq:optimfn} with $\sum \kappa_{ij}^\gamma = C$. Then rescaling $ \kappa_{ij}'=\left(\frac{C'}{C}\right)^{1/\gamma}\kappa_{ij}$ produces a new network with $\sum \kappa_{ij}'^\gamma=C'$. The flows are unaltered in this network, so $H(\kappa_{ij}') = H(\kappa_{ij})$. However, dissipation is changed: $D(\kappa_{ij}')=\left(\frac{C}{C'}\right)^{1/\gamma}D$. So the new network minimizes $\theta$ in Eq. \ref{eq:optimfn} for the new dissipation weighting $c'=\left(\frac{C'}{C}\right)^{1/\gamma}c$. Sweeping through all values $c\geq 0$, with $\sum \kappa_{ij}^\gamma=C$, we generate in one-to-one correspondence all optimal networks for $c'\geq 0$ with $\sum \kappa_{ij}^\gamma = C'$. The choice of the value for $C$ is therefore arbitrary.

\section{Numerical optimization}
\label{sec:numerics}

Chang and Roper \cite{chang2019microvscular} optimized networks for general differentiable functions using gradient descent. However, the mixing entropy that we seek to optimize here is non-differentiable wherever the flow through an edge is equal to 0. It is necessary that the optimization algorithm be able to navigate through such points, because as conductances are updated to maximize mixing entropy it is often necessary to reverse the direction of flow on one or more edges. In Fig \ref{fig:mixinglandscape}, we show a contour map of varying two edge conductances within a network (the original network is shown at top right, and networks with reversed flow bottom and left). The landscape is tiled into watersheds, each watershed represents the set of entropies that can be attained by varying the conductances without reversing the direction of flow on any edge. Between the watersheds are ridgelines, and crossing a ridgeline reverses the direction of flow on one or more edges. Within a watershed, gradient descent can move the network toward the local optimum for the watershed, but deteriorates if the local optimum is on the ridgeline. 
\begin{figure}
\begin{center}
\includegraphics[width=0.65\textwidth]{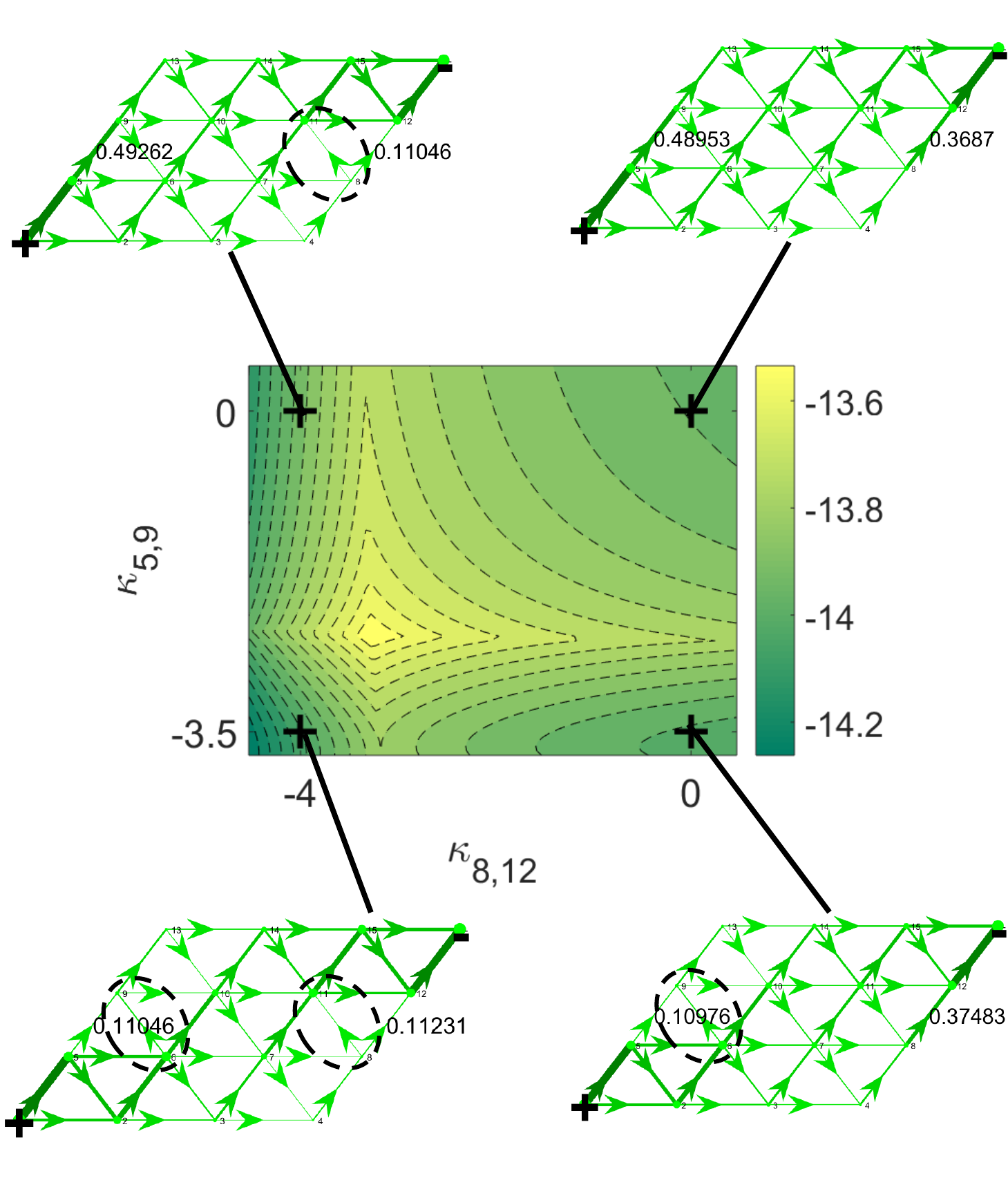}
\caption{Landscape of the mixing entropy. Original network is shown at top right. We systematically perturb the two conductances: $\kappa_{8,12}$ and $\kappa_{5,9}$. At some critical perturbation of the conductances, the flows $q_{8,11}$ and $q_{6,9}$ are respectively reversed (reversed flows are circled in the network plots). The landscape of mixing entropy (middle) has ridge lines where the flow is reversed.} \label{fig:mixinglandscape}
\end{center}
\end{figure}

Even differentiable functions like dissipation produce landscapes with many local optima; accordingly, in \cite{katifori2010damage} simulated annealing and diffusive rearrangements of conductances were implemented to prevent networks from being trapped at unfavorable local optima. We follow a similar approach, by augmenting a gradient-based search that is constrained to remain within a single watershed, with a perturbation method that is designed to provide the network with alternate routes to explore, and by intentional search over adjacent watersheds. We describe the three parts and their integration below.

\subsection{Part 1 of the optimization algorithm: Gradient-based local search}

We perform a gradient-based search, via MATLAB's implementation of the interior-point method in fmincon. Only edges with conductances larger than $10^{-4}$ at the initial state are optimized with smaller conductances treated as constant so that the dimension of the search space is not unnecessarily large. To ensure the search is not challenged to cross the ridges that divide different flow topologies, we enforce the sign of flow in each edge via a set of non-linear constraints on the conductances. Although fmincon is capable of calculating the derivative of $\Theta$ numerically within a watershed, we accelerate the algorithm by computing the gradient analytically using Lagrange multipliers to encode all of the relationships between conductance, flow, transition probabilities and mixing entropy:

We rewrite the array $\kappa_{ij}$ as $|\mathcal{E}|$-entry vector.  The function $\Theta$ (from Eqn. \ref{eq:optimfn}) that we are seeking to optimize is built up from $\kappa_{ij}$ via a chain of dependencies
\begin{equation}
\kappa_{ij}\longmapsto p_{i}\Longmapsto f_{i}\Longmapsto T_{ij}\longmapsto P_{ij}\Longmapsto\tilde{q}_{ij}\longmapsto N_{j}\Longmapsto H. \label{eq:composition}
\end{equation}
Where a single arrow $\longmapsto$ represents a function of the immediately
preceding variable and $\Longmapsto$ represents a function of more
than one of the variables to the left. All of the relationships between variables are described in Section \ref{sec:entropy}. Although it is possible to carry derivatives through this list of compositions, the overhead from isolating and using several derivatives of arrayed functions with respect to arrayed variables, makes the gradient computation forbiddingly slow \cite{PhDThesis}. Instead we follow a similar approach to \cite{chang2018minimal} and use Lagrange multipliers to enforce all of the functional relationships that are embodied in Eq. (\ref{eq:composition}). Eq. (\ref{eq:composition}) then becomes a road-map for the order in which we solve for each of the Lagrange multipliers in our system. The constrained version of Eq. (\ref{eq:optimfn}), omitting the dissipation, becomes:
\begin{eqnarray}
\Theta  = & \sum_{i\in\mathcal{N}}f_{i}\sum_{j:\tilde{q}_{ji}>0}\frac{\tilde{q}_{ji}}{N_{i}}\log\left(\frac{\tilde{q}_{ji}}{N_{i}}\right)-\sum_{i\in\mathcal{N}}\alpha_{i}\left(N_{i}-\sum_{j:\tilde{q}_{ji}>0}\tilde{q}_{ji}\right)-\sum_{i,j\in\mathcal{N}}\gamma_{ij}\left(\tilde{q}_{ij}-f_{i}P_{ij}\right) \nonumber\\
 & -  \sum_{ij\in\mathcal{N}}\mu_{ij}\left(\delta_{ij}-\left(P_{ij}-\sum_{l\in\mathcal{N}}T_{il}P_{lj}\right)\right)-\sum_{i}\sum_{j\in n(i)}\lambda_{ij}\left(T_{ij}-\frac{q_{ij}\mathbf{1}_{q_{ij}>0}}{f_{i}}\right)\\
 & -  \sum_{i\in\mathcal{N}}\beta_{i}\left(f_{i}-\sum_{j\in n(i)}q_{ij}\mathbf{1}_{q_{ij}>0}+\left|Q_{i}\right|\mathbf{1}_{Q_{i}<0}\right)-\sum_{i}\nu_{i}\left(Q_{i}-\sum_{j\in n(i)}\kappa_{ij}(p_{i}-p_{j})\right). \nonumber
\end{eqnarray}

For our gradient descent, we make use of the derivative:
\begin{eqnarray}
\frac{\partial\Theta}{\partial\kappa_{ab}}  = & \lambda_{ab}\frac{\mathbf{1}_{q_{ab}>0}\left(p_{a}-p_{b}\right)}{f_{a}}+\lambda_{ba}\frac{\mathbf{1}_{q_{ba}>0}(p_{b}-p_{a})}{f_{b}}(\beta_{a}(p_{a}-p_{b})\mathbf{1}_{q_{ab}>0}+\beta_{b}(p_{b}-p_{a})\mathbf{1}_{q_{ba}>0})\nonumber \\
 & + (\nu_{a}-\nu_{b})(p_{a}-p_{b})-c(p_a-p_b)^2.
\end{eqnarray}
In which we have made use of the derivatives compiled in Appendix \ref{sec:appendix_derivs} to calculate the derivative of the dissipation.

Our working algorithm uses the above gradients, along with two further transformations. First, we require that all conductances be non-negative. We ensure this by representing our network in terms of {\it log conductances}, defined by: $\kappa_{ij} = \exp(\tilde{\kappa}_{ij})$. Additionally we want to ensure that the total material investment in the network remains constant; i.e. to ensure $\sum \kappa_{ij}^\gamma = C$. In \cite{chang2018minimal} this constraint was added via an additional Lagrange multiplier, but this method guaranteed that the constraint is satisfied only at leading order in the step size. Hence, here we simply rescale the conductances: $\kappa_{ij}\mapsto\left(\frac{C}{\sum_{(i,j)}\kappa_{ij}^{\gamma}}\right)^{\frac{1}{\gamma}}\kappa_{ij}$ after each perturbation. Both transformations need to be considered when calculating the derivatives. For the rescaling we get:
\begin{eqnarray}
\frac{\partial}{\partial\kappa_{ij}}\left(\frac{C^{1/\gamma}\kappa_{ab}}{\sum\kappa_{cd}^{\gamma}}\right) & = & \frac{C^{1/\gamma}}{(\sum\kappa_{cd}^{\gamma})^{1/\gamma}}\left(\delta_{(ab),(ij)}-\frac{\kappa_{ab}\kappa_{ij}^{\gamma-1}}{\sum \kappa_{cd}^\gamma} \right)
\end{eqnarray}
To turn derivatives with respect to $\kappa_{ij}$ into derivatives with respect to $\tilde{\kappa}_{ij}$ we pre-multiply them by $\frac{\partial \kappa_{ab}}{\partial \tilde{\kappa}_{ij}} = \delta_{ai}\delta_{bj}\kappa_{ij}$. 

\subsection{Part 2 of the optimization algorithm: Redistributing material}

Similar to dissipation-minimizing networks \cite{katifori2010damage} our optimization algorithm has many local optima in which source and sink are sparsely connected. To find the true global optimum, our algorithm includes a step for redistributing material within the network, in a way that presents the algorithm with a range of paths of different lengths between source and sink. However, although \cite{katifori2010damage} previously redistributed material by diffusing it on the graph, we found this method tends to short circuit the network by introducing much shorter paths between source and sink. The appearance of these paths is catastrophic for the optimization algorithm, since they are attracting local optima but far from the global optima \cite{PhDThesis}. Since our algorithm does not send conductances exactly to zero, we define a threshold conductance $\kappa_c$, and say that an edge (as well as the vertices that it connects) is in the support of the network if its conductance exceeds $\kappa_c$. In practice we found that a value $\kappa_c=2\times 10^{-2}$ worked for all of the simulations shown in this paper.

We redistribute material using a \emph{network growth step}, which adds spurs of material from the network's support. Our algorithm takes the form of a set of operators: $\text{Grow}_{\text{up-right}}(\kappa_{ij})$
where ``up-right'' in the subscript
can be replaced with the ``up-left'', ``down-left'' or ``down-right'' to
denote the direction in which material is added. Growth in the up-right direction adds edges that link nodes in the support to nodes not in the support that are upwards and right of them. Each step of the growth algorithm concatenates growth in two non-parallel directions. In practice we did not find it necessary to include right or left growth. We will describe the up-right growth step: other growth steps can be derived from this step by symmetry.

\begin{enumerate}
\item First locate up-right edges in the triangle grid for which the
top-right node is outside of the support of the network and the bottom-left node is inside the support.
\item Add positive conductance to each of these edges to form a new network $\kappa_{\text{up-right}}$ (Fig. \ref{fig:grow_triangle_2}). 
Each new edge is assigned conductance equal to the average conductance of the edges from the support adjacent to its bottom-left node in the set of bottom-left nodes identified in 1.
\item In $\kappa_{\text{up-right}}$ locate the nodes which are top right nodes of
edges in the support. Call this set the top-right nodes (Fig. \ref{fig:grow_triangle_3}).
\item For every top-right node $i$, if $i$ is the apex of a triangle whose base and up-right edge lie in the support of $\kappa_{\text{up-right}}$, complete this triangle with an up-left edge whose conductance is the arithmetic mean of the other two edges. For every top-right node $i$ that is the left vertex of an inverted triangle, whose up-left and up-right edges lie in the support of $\kappa_{\text{up-right}}$, complete the triangle with a horizontal edge whose conductance is the arithmetic mean of the other two edges (Fig \ref{fig:grow_triangle_final}).
\end{enumerate}
\begin{figure}
\begin{center}
\subfloat[]{\label{fig:grow_triangle_1} \includegraphics[width=0.18\textwidth]{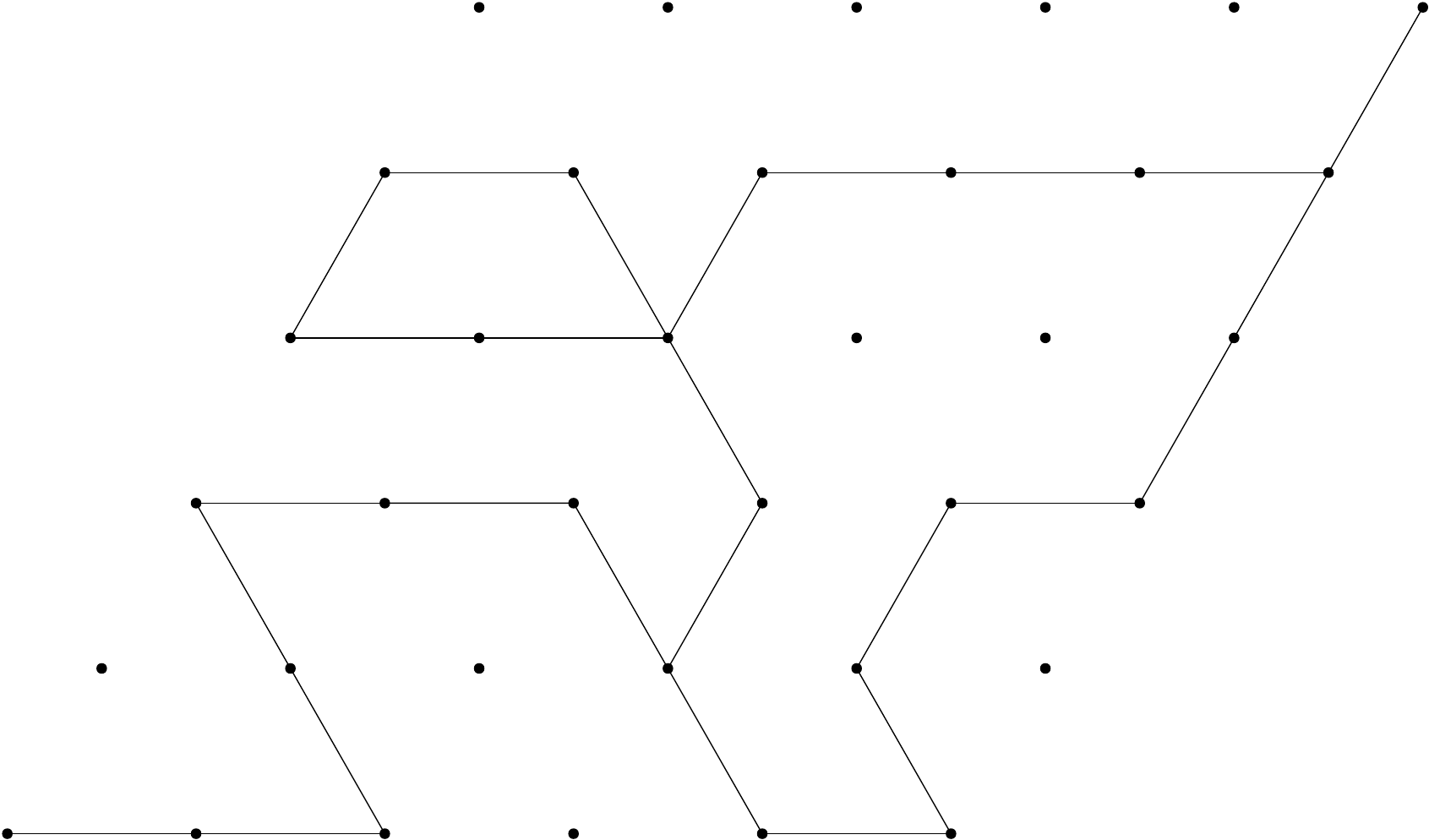}}
\subfloat[]{\label{fig:grow_triangle_2}\includegraphics[width=0.18\textwidth]{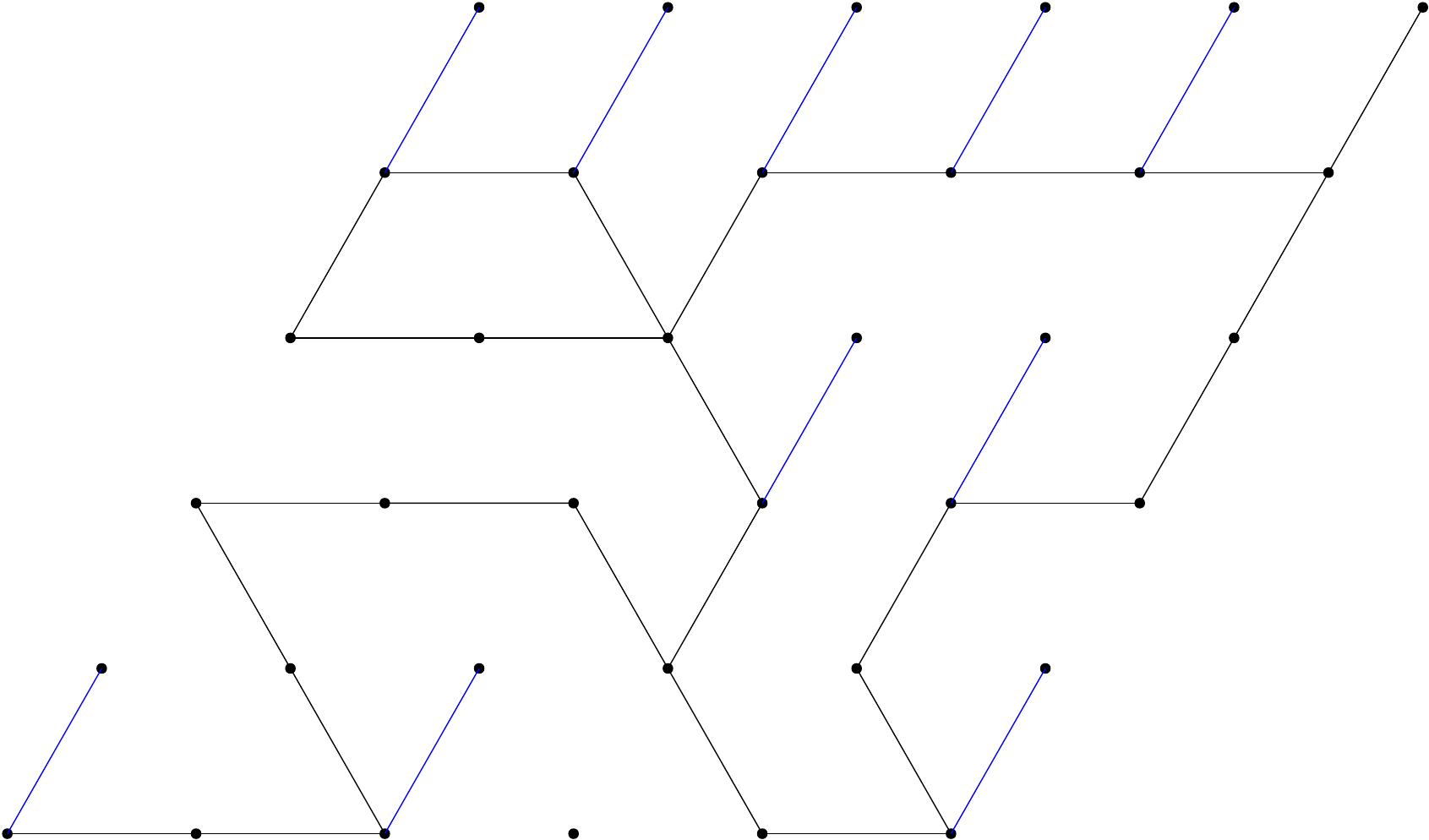}}
\subfloat[]{\label{fig:grow_triangle_3}\includegraphics[width=0.18\textwidth]{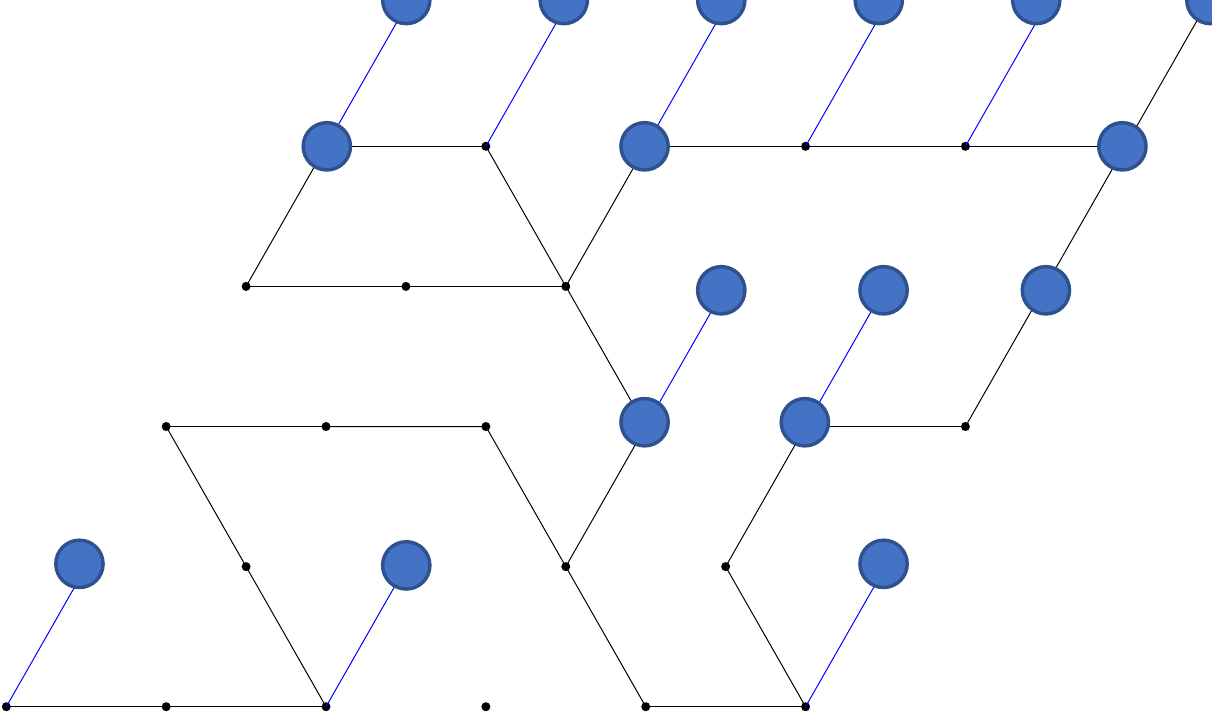}}
\subfloat[]{\label{fig:grow_triangle_final}\includegraphics[width=0.18\textwidth]{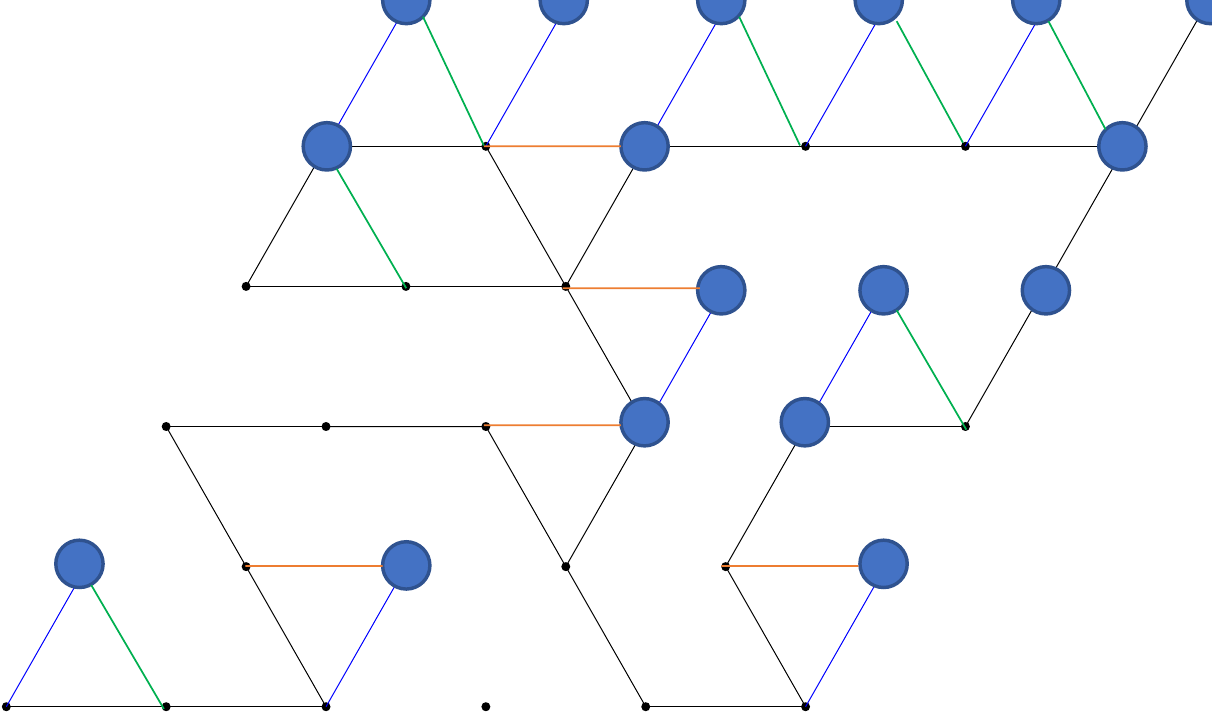}}
\subfloat[]{\label{fig:twosteps2}\includegraphics[width=0.18\textwidth]{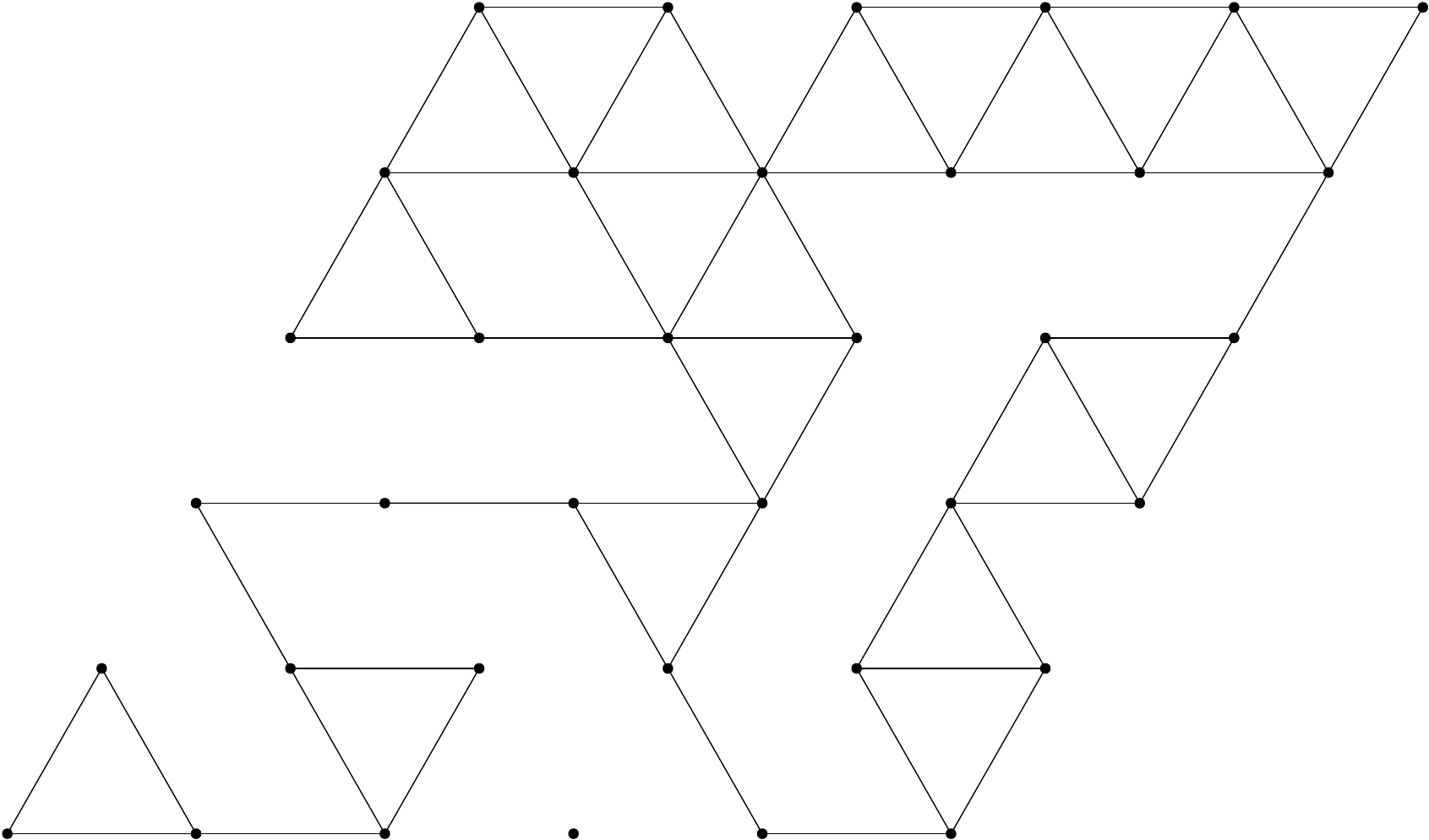}}
\caption{Sequence of steps performed to construct $\kappa_{\text{up-right}}$ from an initial network $\kappa$ shown in (a). (b) blue edges are added up-rightwards from a (bottom) node in the support of the network to a (top) node outside of the support of the network. (c) We identify \textbf{top-right nodes} (blue circles) whose bottom left edges have non-zero conductance following the first step. (d) Add leftwards (orange) and down-rightwards (green) edges if they complete triangles with edges in the current support. (e) Our optimization scheme applies the up-right growth algorithm twice.}
\end{center}
\end{figure}

\subsection{Part 3 of the optimization algorithm: changing flow directions}

Part 1 of our optimization algorithm can reliably locate local optima while respecting the directions of flow on every edge (i.e. the \textit{flow topology}). To find the true global optimum we search systematically over adjacent topologies. To do this, we take one edge within the network, and find the smallest increase and decrease in the conductance of the edge that will change the direction of flow in at least one other edge within the network. To find the smallest change in conductance, we use the Sherman-Morrison formula \cite{sherman1950adjustment}, which allows us to calculate an explicit expression for the conductance change necessary to reverse the flow in any edge of the network (Eq. \ref{eq:flowreversal}). Given a causal edge $(a,b)$, Eq. \ref{eq:flowreversal} enables us to compute a set of perturbations $t_{abuv}$ to $\kappa_{ab}$ to reverse the flow on any edge $(u,v)$. We filter these perturbations to keep only perturbations in which $\kappa_{ab}$ is not allowed to become too small (in practice a threshold of $10^{-3}$ gives good results), to prevent this part of the algorithm 
getting stuck engineering and then re-engineering flow reversals on
edges that already have very low conductance. This method was used to find the set of flow changes shown in Fig. \ref{fig:mixinglandscape}.

\subsection{Synthesis of parts, initialization and termination} \label{sec:synthesis}

We initialize the algorithm by assigning each edge within the ambient network an $U(0,1)$ conductance, and then scaling all conductances to ensure $\sum \kappa_{ij}^\gamma=C$. A single step of the algorithm consists of running all three of its parts sequentially. Part 1 locates a locally optimal network that respects the flow directions given to it, while the random choice of growth directions in Part 2 and of causal conductances in Part 3 stochastically alters the topology of the network. We compare the local optima arrived at the end of consecutive Part 1's; if the new local optimum has a lower value of $\Theta$ than the old, we keep it, otherwise we revert to the old optimum.

Our descent step uses the MATLAB optimization function fmincon using the interior-point algorithm
with 1000 max iterations, with flow directions constrained on all edges with non-negligible conductance (see below), and with analytically computed derivative. Our growth step requires first picking a single direction in which to grow the network: up-left, up-right, down-left or down-right. To ensure that every direction is sampled, we sequentially step through a permutation of all 4 growth directions, choosing a new permutation every 4 steps.

We then count the number of times that during a successful step $\Theta$ decreased by less than $10^{-2}$. When this count reaches $4$ we terminate the algorithm. Otherwise we allow the algorithm to run for 50 iterations. Usually the stopping criterion is reached in fewer than 15 steps. We tested that our algorithm reliably (i.e. in more than half of runs) located the theoretically obtained optimal network when $c=0.05$. Our algorithm constrains all conductances to be be positive, through the use of the coordinate transformation $\kappa_{ij} = \exp(\tilde{\kappa}_{ij})$. In practice, the local optima located by our algorithm use only a subset of the edges in the ambient network. We disregard edges with small conductances (in practice any edges with conductance less than $2\times 10^{-4}$): specifically the directions of flow on these edges are not considered when constraining flow directions in part 1 or when determining the perturbations that cause flows to switch in part 3. 

After the algorithm terminates we perform a final filtering step to deal with the fact that our gradient search is somewhat slow at removing edges from the network or redistributing material between high conductance edges. To filter, we set all edges with conductance $\leq 10^{-3}$ to $10^{-9}$, re-scale all edges so that the material cost of the network stays the same and then run our gradient-search with 10000 max iterations.

Most of our simulations involve sweeps over $c-$values (see Section \ref{sec:results}), typically involving 100-200 replicate networks whose $c$ values are close enough that we expect them to be topologically equivalent. We can further boost coverage since any local optimum, $\hat{\kappa}$ discovered by our algorithm at $c=\hat{c}$ can be compared with local optima for different values of $c$ by tracing the line: $\hat{\theta} = -H(\hat{\kappa})+cD(\hat{\kappa})$. We form the envelope of these straight lines (Fig. \ref{fig:triangle_1src1snk_5_theta}). At any value of $c$, we identify the network that produces the straight line on which $(c,\Theta(c))$ lies as the {\bf global optimal} network for that value of $c$. 

\section{Results from numerical optimization}
\label{sec:results}

\subsection{Optimal networks are paths for small values of $\gamma$}

We first studied the effect of fixing the value of $c$ and constructing optimal networks over a range of values for $\gamma$. We found that at each value of $c$ the number of loops in the path increased with $\gamma$ (cf. dissipation minimizing networks, which form loops only when $\gamma>1$ \cite{durand2007structure,bohn2007structure}). At physiologically relevant values of $\gamma$ ($\gamma\approx 0.45-0.5$) the globally optimal networks are simple loopless paths linking source to sink.
\begin{figure}
\begin{center}
\includegraphics[width=0.8\textwidth]{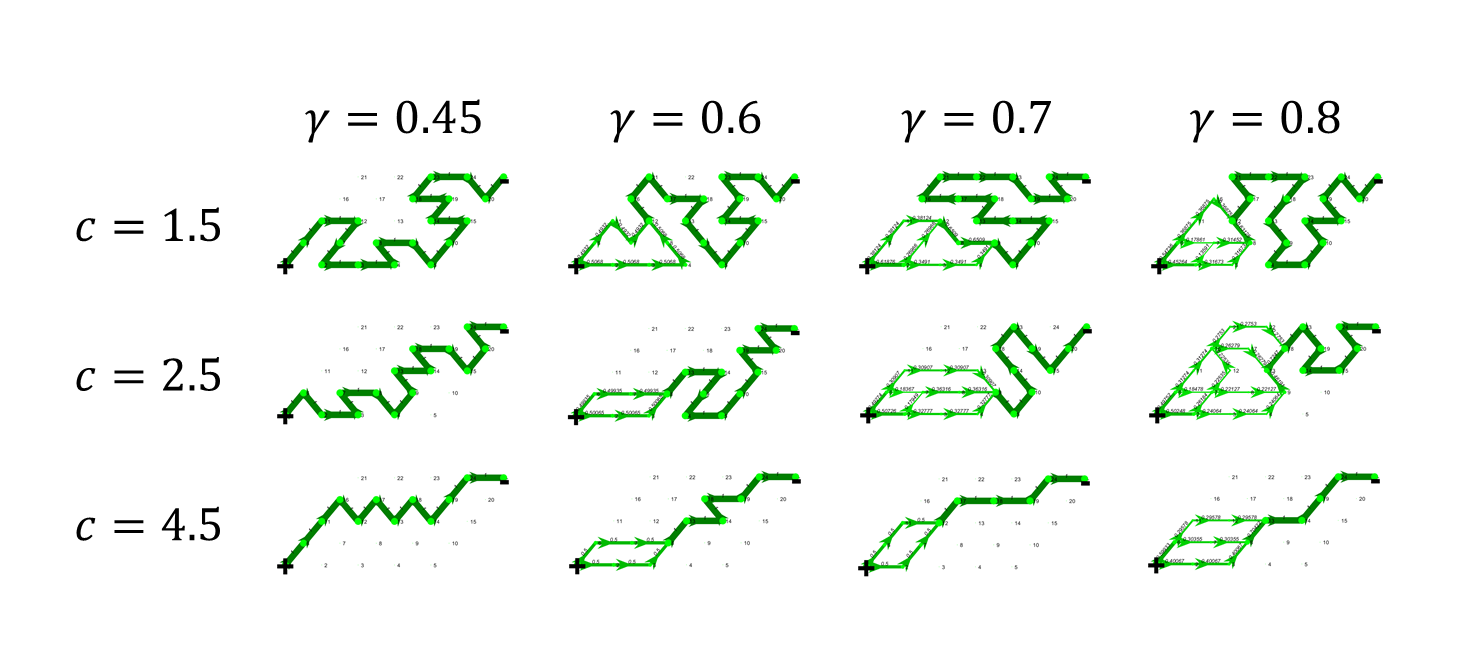}
\caption{Numerically computed optimal networks on a $5\times 5$ triangular grid for three different values of $c$ (rows) and four different values of $\gamma$ (columns). At sufficiently small values of $\gamma$ (in particular for $\gamma=0.45$ at each assayed value of $c$), optimal networks are all paths from source to sink. Increasing $\gamma$ progressively adds loops to the network. Increasing $c$ increases the number of nodes visited by the network.} \label{fig:gammavaries}
\end{center}
\end{figure}

\subsection{Length of optimal networks increases with $c$}

\begin{figure}
\begin{center}
\includegraphics[width=0.22\textwidth]{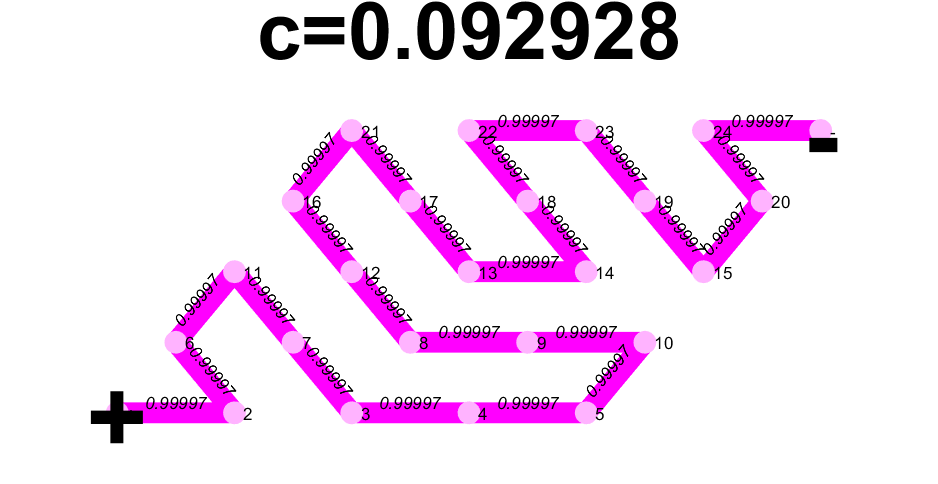}
\includegraphics[width=0.22\textwidth]{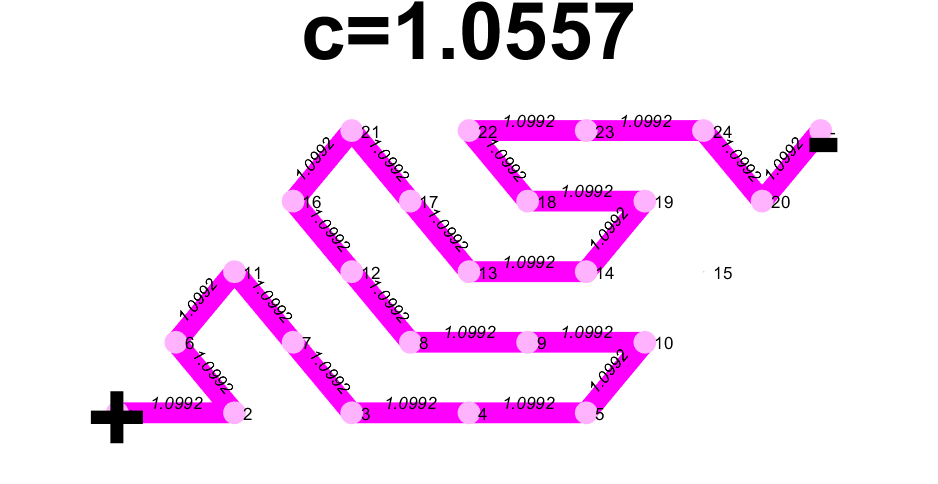}
\includegraphics[width=0.22\textwidth]{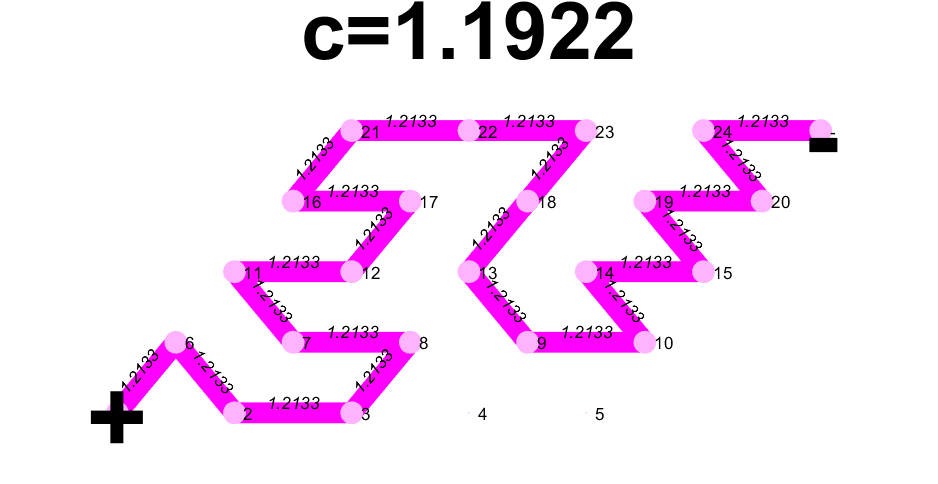}
\includegraphics[width=0.22\textwidth]{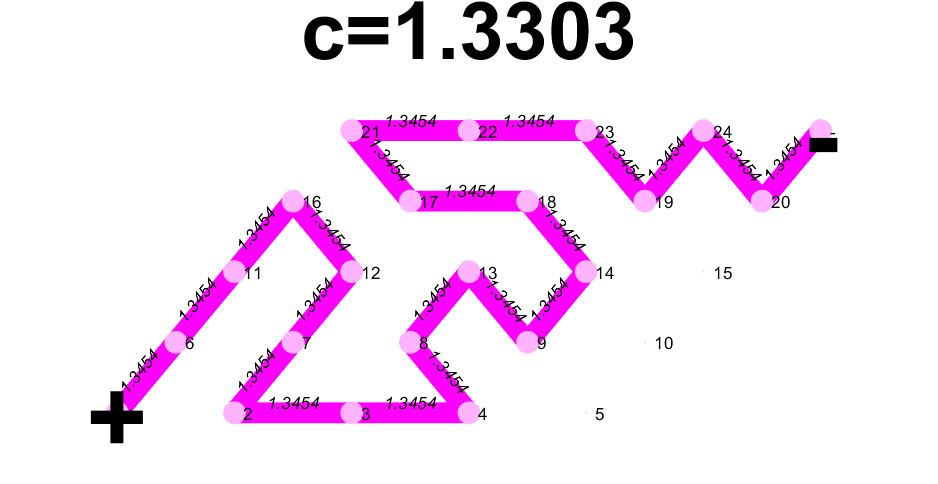}\\
\includegraphics[width=0.22\textwidth]{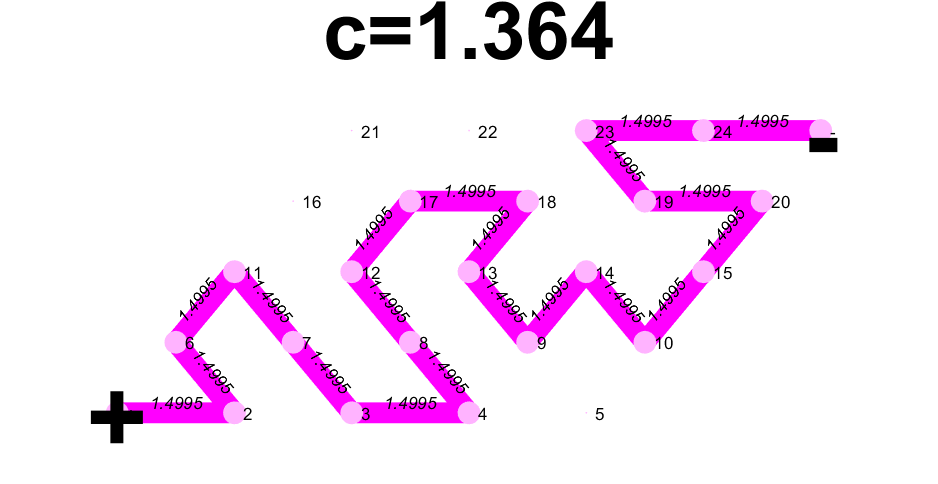}
\includegraphics[width=0.22\textwidth]{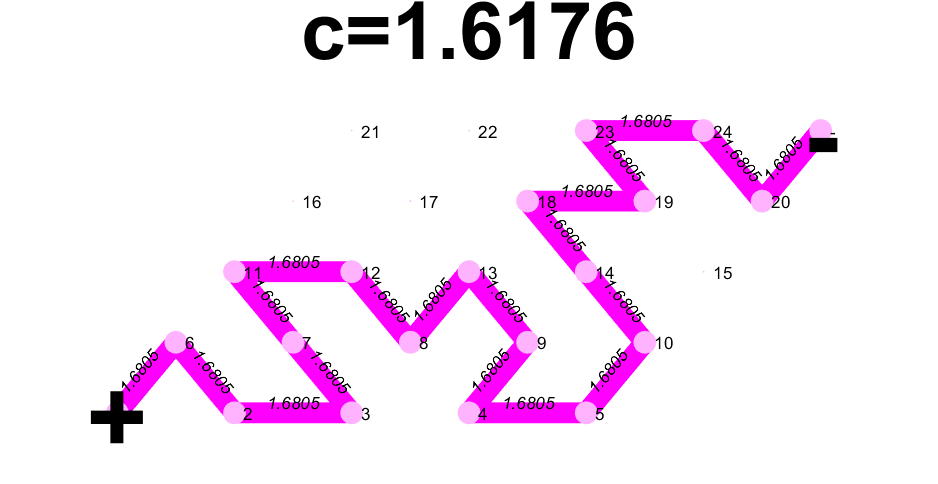}
\includegraphics[width=0.22\textwidth]{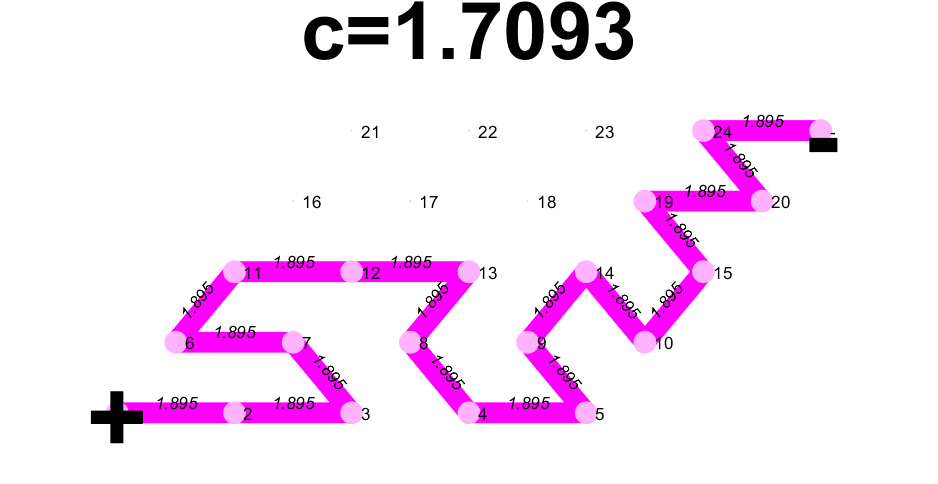}
\includegraphics[width=0.22\textwidth]{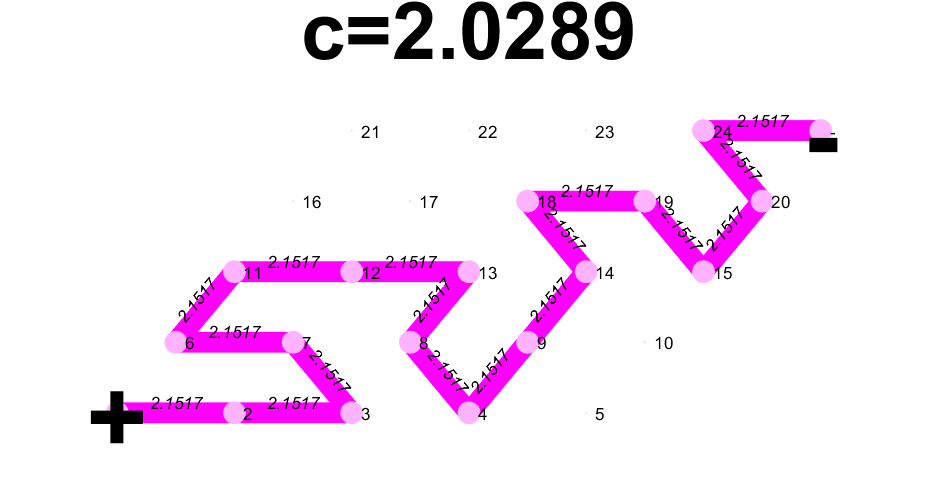}\\
\includegraphics[width=0.22\textwidth]{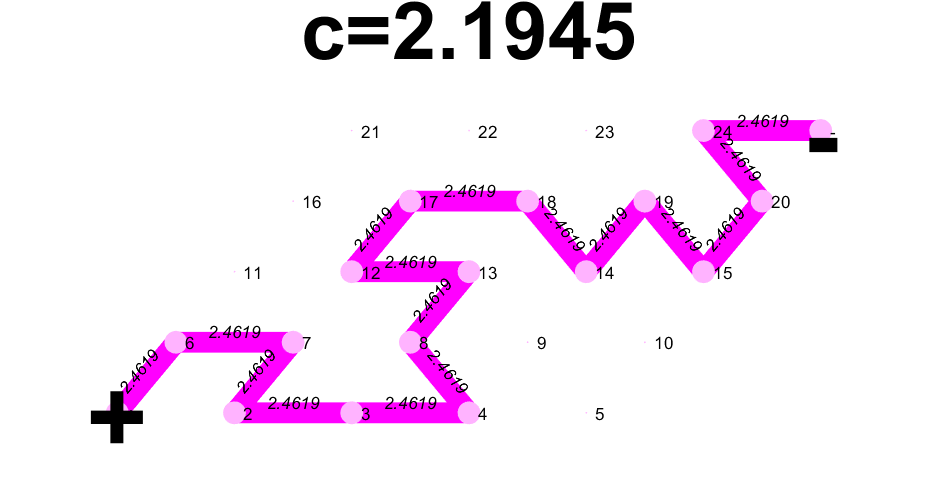}
\includegraphics[width=0.22\textwidth]{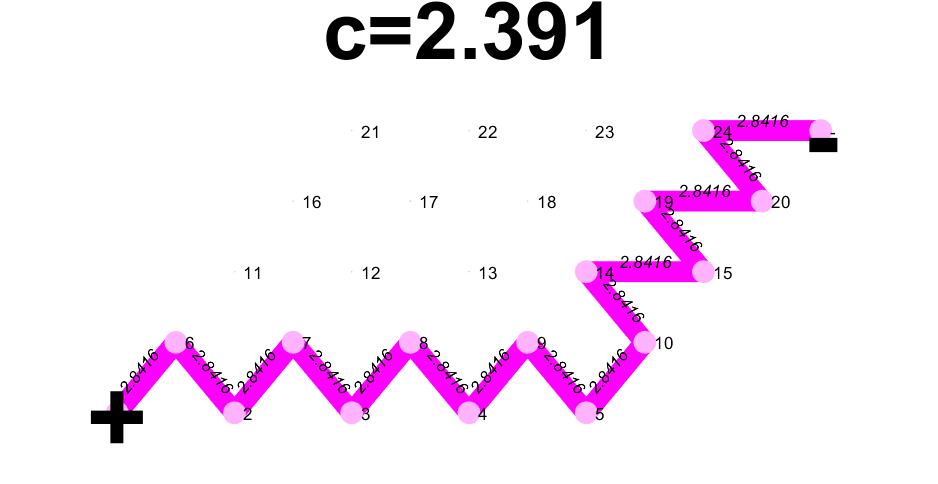}
\includegraphics[width=0.22\textwidth]{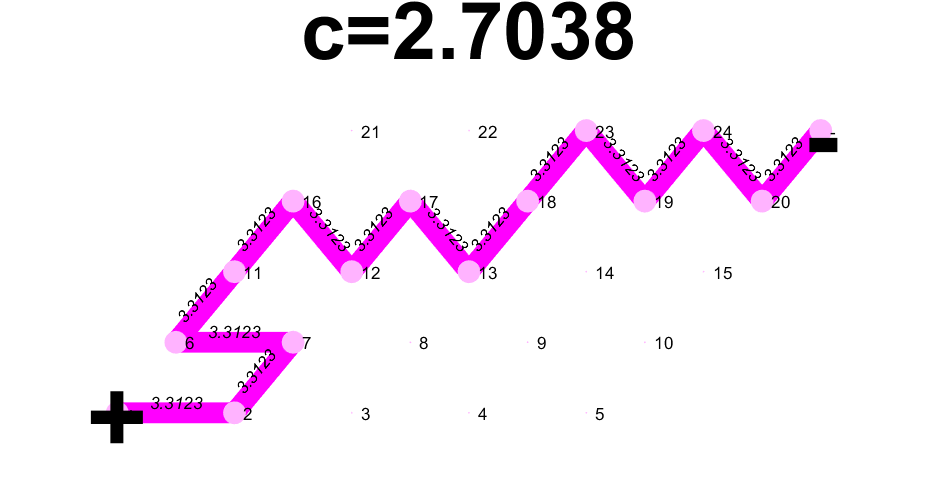}
\includegraphics[width=0.22\textwidth]{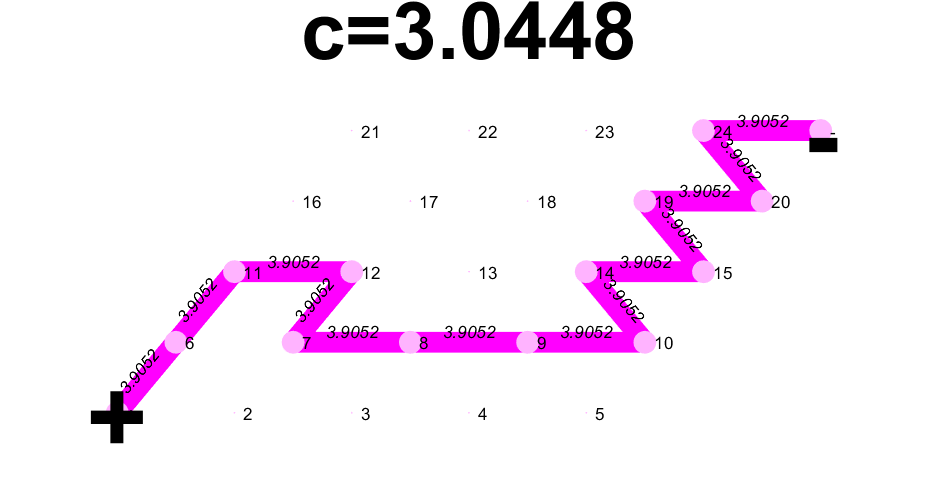}\\
\includegraphics[width=0.22\textwidth]{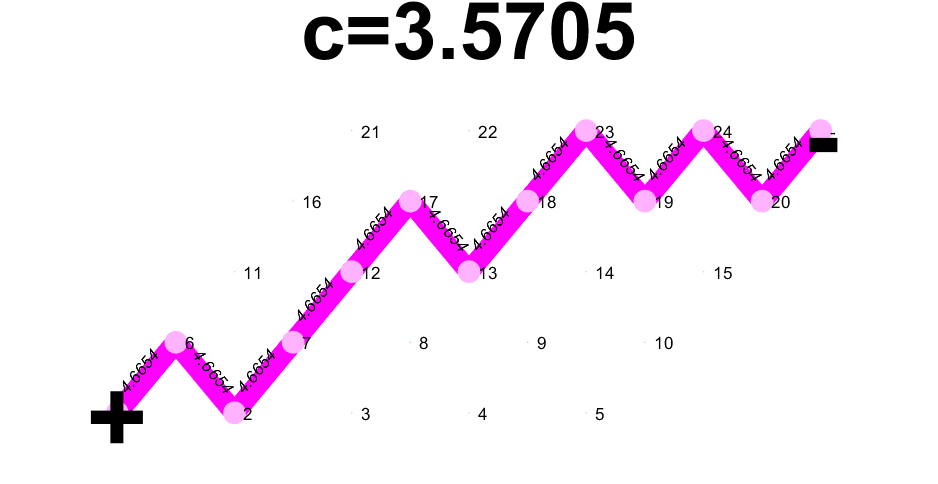}
\includegraphics[width=0.22\textwidth]{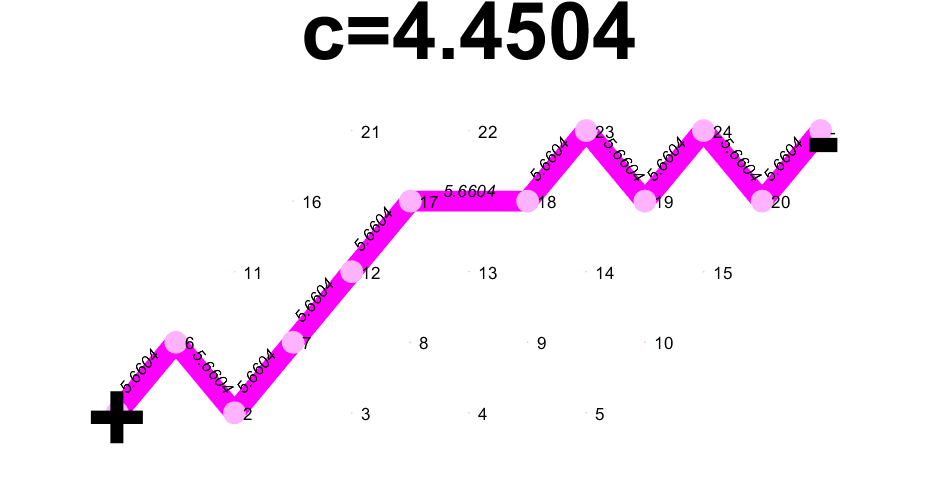}
\includegraphics[width=0.22\textwidth]{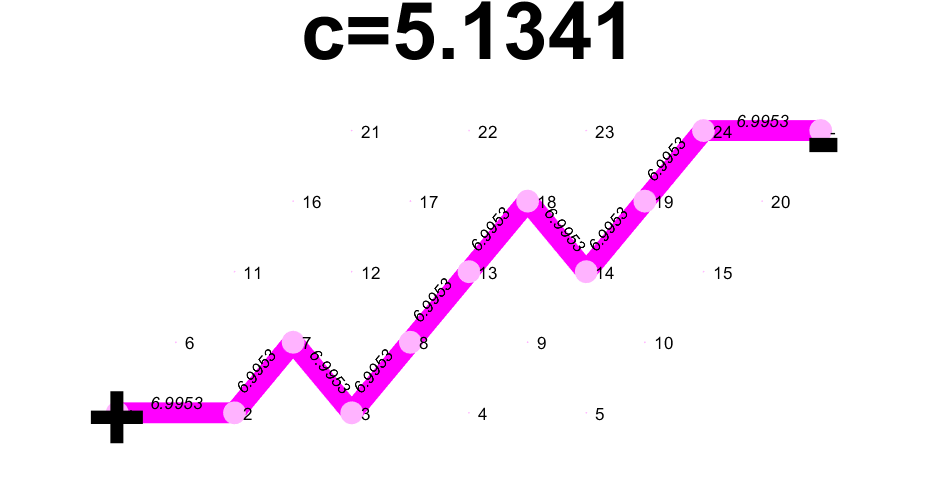}
\includegraphics[width=0.22\textwidth]{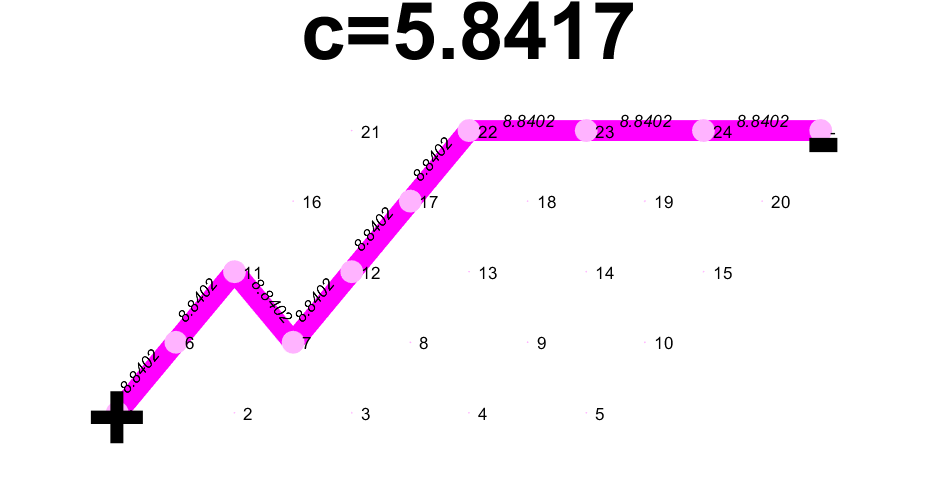}\\
\includegraphics[width=0.22\textwidth]{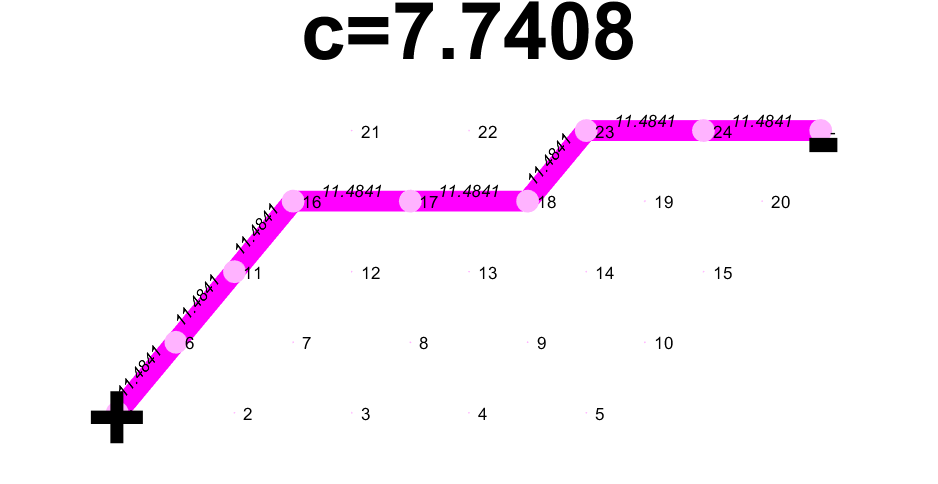}
\caption{Selected optimal networks for $\gamma=0.45$, on a $5\times 5$ triangular grid, for increasing values of $c$ (values given above each panel), are all paths with decreasing lengths. Every possible length of path between $24$ and $9$ edges is obtained. The networks shown are the magenta points in Figure \ref{fig:triangle_1src1snk_5_theta}. } \label{fig:triangle_1src1snk_5_q}
\end{center}
\end{figure}

We noticed that in Fig. \ref{fig:gammavaries} changing $c$ changes the number of edges in the network. To investigate the effect of $c$ more systematically, we performed a numerical sweep of $c$ values, holding $\gamma=0.45$ fixed. All of the optimal networks were simple paths. As $c$ increases, the globally optimal network systematically explored all path lengths from $N^2-1$ (a path that visits every node exactly once), to $2N-1$; (the shortest path linking source to sink). Fig. \ref{fig:triangle_1src1snk_5_theta} shows the complete $\Theta(c)$ trace, including for the networks included in Fig. \ref{fig:triangle_1src1snk_5_q}. The numerically obtained $\Theta(c)$ is piecewise linear, with slope discontinuities at each $c$-value where the length of the optimal network increases by one.
\begin{figure}
\begin{center}
\includegraphics[width=0.6\textwidth]{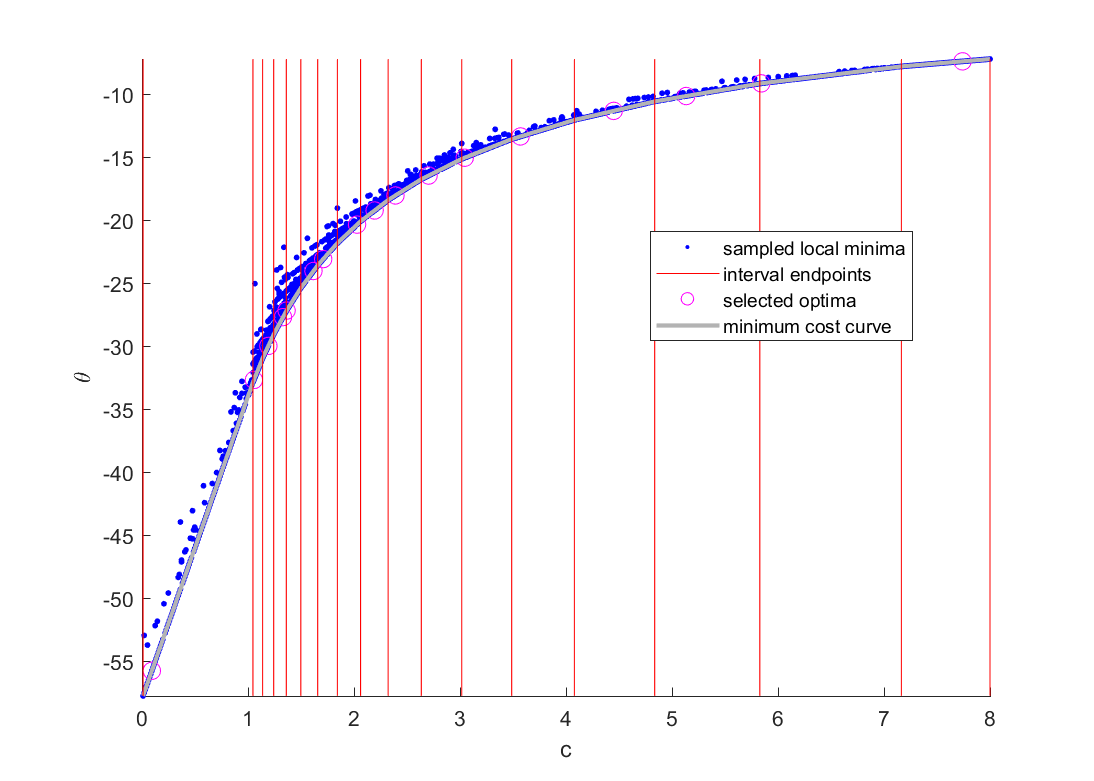}\caption{Optimal mixing-dissipation costs for networks on $5\times 5$ grid, with $\gamma=0.45$ and $K=24$. The $c$-domain is divided into subintervals  $[.01,c_{25,24}]$, $[c_{m+1,m},c_{m,m-1}]$ for $m=9,10,\ldots ,24$ and $[c_{10,9},7]$ ($c_{m,n}$ is defined in Section \ref{sec:theory}), and 100 locally optimal networks are generated within each subinterval. Global optima are derived is described in Section \ref{sec:synthesis}. Selected globally optimal networks (magenta points) are plotted in Fig. \ref{fig:triangle_1src1snk_5_q}.} \label{fig:triangle_1src1snk_5_theta}
\end{center}
\end{figure}

\section{Optimal path networks}
\label{sec:theory}
Our numerical results from Section \ref{sec:results} highlight three properties of optimal networks: that they are simple paths at small and moderate values of $\gamma$, that the path length decreases as $c$ increases, and that for small values of $c$ the simple path visits every vertex in the network. In this section we will rigorously state and prove theorems justifying these properties.

We will first prove separate results for mixing and for dissipation. We will show that the optimal network for mixing (that is, without considering the cost of dissipation) is a {\bf tour} -- a path that visits every node in the network from source to sink. Then given only the flows on a network, we bound its dissipation. Second, we will consider optimization among path (i.e. loopless) networks, showing that as the cost of dissipation is increased, the length of the optimal {\it path} network decreases monotonically in length in steps of 1, from the tour to a geodesic (shortest path). Finally we show that among all networks, over many different values for the dissipation penalty factor $c$, the optimal network is a path for all sufficiently small $\gamma$. We start by introducing a notation for paths of different lengths, assuming that conductances are uniform, i.e. the same on each edge within the path, which is favored for minimizing dissipation.

\begin{definition}
Say that a path from source to sink has length $m$ if it visits exactly $m$ nodes. We use the notation $\tau_m$ to denote any uniform conductance path of length $m$. Further, we call $\tau_{|\mathcal{N}|}$, the path that visits every node in the ambient network, a \emph{tour}. 
\end{definition}

\subsection{The optimal network for mixing is a tour}
\begin{thm}
\label{thm:network_best_NME} Suppose that $q_{ij}$ is a flow network
with node-set $\mathcal{N}$, and $|\mathcal{N}|=n$. Then the maximum possible
total mixing entropy is $\log(n!)$, and this maximum is attained only for a path that visits all $n$ nodes exactly once.
\end{thm}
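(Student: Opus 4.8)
The plan is to bound the total mixing entropy node-by-node and then assemble the bounds along a topological ordering of the flow. First I would record a normalization: since $H=\sum_i f_i H(\mathcal{P}_i)$ is homogeneous of degree one in the flows — rescaling every $q_{ij}$ by $\lambda$ multiplies each $f_i$ by $\lambda$ but leaves every $\mathcal{P}_i(j)=\tilde{q}_{ji}/N_i$, and hence every $H(\mathcal{P}_i)$, unchanged — it suffices to bound $H$ for flows normalized to unit total inflow, the normalization for which the tour attains $\log n!$. Because signals never revisit a node (the property behind $T^N=0$), the subgraph of positive flows is acyclic, and for each node $i$ I would write $a_i$ for the number of nodes $j$ with $\tilde{q}_{ji}>0$, i.e. the number of ancestors of $i$, counting $i$ itself.

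Next I would carry out the two estimates. Since $\mathcal{P}_i$ is a probability distribution supported on exactly those $a_i$ ancestors, the elementary maximum-entropy bound gives $H(\mathcal{P}_i)\le\log a_i$, with equality iff $\mathcal{P}_i$ is uniform; hence $H\le\sum_i f_i\log a_i$, reducing the theorem to a combinatorial inequality. I would relabel the visited nodes (those with $f_i>0$) in a topological order compatible with ancestry, say positions $1,\dots,m$ with $m\le n$. An ancestor of the node in position $p$ must occupy a position $\le p$, so $a_{(p)}\le p$. Simultaneously, decomposing the unit acyclic flow into source-to-sink paths shows that the flow through any node is a sum of path weights totalling at most $1$, so $f_{(p)}\le 1$. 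Combining these, and using $\log p\ge 0$,
\[
H \;\le\; \sum_{p=1}^{m} f_{(p)}\log a_{(p)} \;\le\; \sum_{p=1}^{m} f_{(p)}\log p \;\le\; \sum_{p=1}^{m}\log p \;=\; \log m! \;\le\; \log n!.
\]

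Finally I would extract the equality case, which is where the real work lies. Equality throughout forces $m=n$ (every node visited), $f_{(p)}=1$ for every $p\ge 2$ and hence for the source as well, $a_{(p)}=p$ for all $p$ (each node has all earlier nodes as ancestors), and each $\mathcal{P}_i$ uniform. The requirement that a unit, conserved, acyclic flow pass with weight exactly $1$ through every node leaves no room for branching, so the flow must be a single directed path; since all $n$ nodes carry flow, that path is a tour, and conversely the tour saturates every inequality. The main obstacle I anticipate is precisely this equality analysis: checking that the three inequalities cannot be slack in mutually compensating ways, and that ``$f_i=1$ for all $i$ together with $a_{(p)}=p$'' genuinely forces a single non-branching path rather than some more exotic acyclic flow, is the step demanding the most care; the per-node entropy bound and the counting inequality are routine by comparison.
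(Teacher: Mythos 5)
Your proof is correct, and while it uses the same two raw ingredients as the paper --- the per-node maximum-entropy bound $H(\mathcal{P}_i)\le\log\#\{j:\tilde q_{ji}>0\}$ and a unit-mass bound on flows --- it assembles them by a genuinely different mechanism. The paper partitions the nodes into level sets $V_k$ of equal ancestor count and proves, via the pressure ordering, that no signal path can visit two nodes of the same $V_k$, giving the collective bound $\sum_{v\in V_k}f_v\le 1$ and hence $H\le\sum_{k=1}^n\log k$. You instead fix a topological order of the acyclic positive-flow subgraph, use the trivial count $a_{(p)}\le p$, and invoke only the individual bound $f_i\le 1$ from a path decomposition of the unit flow; this avoids the $V_k$ antichain lemma entirely and never needs pressures, only acyclicity, at the cost of a slightly weaker intermediate estimate (which still suffices for $\log n!$). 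Where your route genuinely buys something is the equality case: the paper's converse is the one-line claim that a non-tour must leave some $V_k$ empty, which as stated overlooks flow-deficient configurations such as $s\to a\to b$ with a weak shortcut $s\to b$ carrying flow $1-x$, $0<x<1$ --- there $V_1,V_2,V_3$ are all nonempty singletons and optimality fails only because $f_a=x<1$. Your termwise bookkeeping ($f_{(p)}=1$ and $a_{(p)}=p$ for all $p\ge2$) catches exactly this case. Do write out fully the step you flag yourself: $f_i=1$ at every node forces every positive-weight path in the decomposition to visit all $n$ nodes (since $\sum_{r\ni i}w_r=1=\sum_r w_r$), and a DAG admits at most one Hamiltonian path (two distinct ones would order some pair oppositely, creating a directed cycle), so the flow is a single tour.
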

The intuitive interpretation of this result is that all of the nodes in the network can be ordered by their pressures, $p_i$. Signals from node $i$ can reach node $j$ only if $p_i>p_j$. An optimal mixing configuration is one in which signals from node $i$ reach all downstream $j$ with probability 1, which requires that the downstream network is a path that visits each downstream node in turn.

\begin{proof}
Let $x=\{x_{t}\}_{t\geq0}$ be the random walk on the flow network defined in Section \ref{sec:entropy}. Let
$V_{k}\subset\mathcal{N}$ be the set of nodes $v$ that receive signals from exactly
$k$ nodes: that is, $V_{k}=\left\{ u\in\mathcal{N}:\#\left\{ v:P_{vu}>0\right\} =k \right\} $.

For a subset of nodes $S\subset\mathcal{N}$ we say that $x$ hits $S$, if for some $t\in\mathbb{Z}_{\geq0}$
$x_{t}\in S$. Let $k\geq1$ such that $V_{k}$ is non-empty. Let
$u\not=v\in V_{k}$. A signal $x$ can not visit more than one node in $V_k$ for, if $x$ hits both $v_1$ and $v_2\in V_k$, and WLOG $p_{v_1}>p_{v_2}$, then for any $u\in\mathcal{N}$ with $P_{uv_1}>0$, we must also have $P_{uv_2}>0$. So $\{u:P_{uv_2}>0\}\subsetneq\{u:P_{uv_2}>0\}$, which is impossible since $v_1,v_2\in V_{k}$ implies both of these sets contain $k$ elements.

Let $v\in V_{k}$. Then $H(\mathcal{P}_v)\leq \log(k)$ because there
are exactly $k$ nodes $u$ with $\tilde{q}_{uv}>0$. This inequality, together
with $\sum_{v\in V_{k}}f_{k}\leq1$ gives us the lower bound on $H$:
\begin{equation}
\sum_{i\in\mathcal{N}}f_{i}H(\mathcal{P}_i)=\sum_{k=1}^{n}\sum_{v\in V_{k}}f_{v}H_v \leq \sum_{k=1}^{n}\log(k)=\log(n!).
\end{equation}
We show that the only network with $n$ nodes attaining the maximal entropy
is a path. For the path, labeling the nodes $1,2,\ldots,n$ in the order in which they are visited from source to sink we find for $1\leq i<j\leq n$, $\tilde{q}_{ij}=1$.
Therefore the probability distribution of signals $\mathcal{P}_{i}$
is the uniform distribution on $i$ atoms and has entropy
$H_i = \log(i)$. Hence, $H=\sum_{i=1}^{n}f_{i}H_i=\log(n!)$.

Conversely, if $q_{ij}\neq \tau_{n}$, then there must be at least one $k\in\{1,2,\ldots n\}$ such that
$V_{k}=\emptyset$. In this case, $H$ differs from $\log(n!)$
by at least $\log(k)$.
\end{proof}
\begin{corollary}
\label{prop:Best-for-restricted} Let $q_{ij}$ be a flow network
on nodes $\mathcal{N}$ and $\emptyset\not=\mathcal{F\subset\mathcal{N}}$
with $|\mathcal{F}|=m$. Then $H_{\mathcal{F}}\leq\log(m!)$.
\end{corollary}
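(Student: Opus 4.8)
The plan is to reprove Corollary~\ref{prop:Best-for-restricted} by repeating the argument of Theorem~\ref{thm:network_best_NME} almost verbatim, but with every node set, reachability count, and summation restricted to $\mathcal{F}$. The point is that $H_{\mathcal{F}}$ is built from the restricted distributions $\mathcal{P}_{\mathcal{F}i}$, which only ``see'' senders lying in $\mathcal{F}$, so the whole combinatorial skeleton of the theorem's proof transfers once we replace the in-degree of a node by its $\mathcal{F}$-restricted in-degree.

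First I would set up the restricted analogue of the sets $V_k$. For each $i\in\mathcal{F}$ with $f_i>0$, let $d_i=\#\{j\in\mathcal{F}:\tilde{q}_{ji}>0\}$ be the number of nodes of $\mathcal{F}$ that send signal to $i$, and partition the positive-flow nodes as $V_k^{\mathcal{F}}=\{i\in\mathcal{F}:d_i=k\}$ for $k=1,\dots,m$. Since $i$ always sends to itself ($\tilde{q}_{ii}=P_{ii}f_i=f_i>0$) and $|\mathcal{F}|=m$, we have $1\le d_i\le m$, so this is a genuine partition (nodes with $f_i=0$ contribute nothing to $H_{\mathcal{F}}$ and may be ignored). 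The restricted distribution $\mathcal{P}_{\mathcal{F}i}$ is supported on exactly $d_i$ atoms, so $H(\mathcal{P}_{\mathcal{F}i})\le\log d_i$, the entropy of the uniform distribution on $d_i$ atoms.

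The key step, and the one needing care, is the restricted version of the claim that a signal visits at most one node of $V_k^{\mathcal{F}}$. Suppose a signal path hits two nodes $v_1,v_2\in V_k^{\mathcal{F}}$ with $p_{v_1}>p_{v_2}$. As in the theorem, the Markov property gives $P_{uv_2}\ge P_{uv_1}P_{v_1v_2}>0$ for every $u$ with $P_{uv_1}>0$, while $P_{v_2v_1}=0$ because signals strictly decrease in pressure. Intersecting both sender sets with $\mathcal{F}$ still yields $\{u\in\mathcal{F}:\tilde{q}_{uv_1}>0\}\subsetneq\{u\in\mathcal{F}:\tilde{q}_{uv_2}>0\}$, the strictness witnessed by $v_2$ itself, which contradicts $d_{v_1}=d_{v_2}=k$. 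The only subtlety relative to the theorem is confirming that a signal passing through intermediate nodes outside $\mathcal{F}$ does not break this monotonicity; it does not, since only the reachability relation $P_{uv}>0$ between endpoints $u,v\in\mathcal{F}$ enters the argument.

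With disjointness established, I would finish exactly as in the theorem: interpreting $f_v$ as the probability that a random signal passes through $v$ (so that the mutually exclusive visit-events for $v\in V_k^{\mathcal{F}}$ have probabilities summing to at most one, under the same flow normalization as Theorem~\ref{thm:network_best_NME}), we get $\sum_{v\in V_k^{\mathcal{F}}}f_v\le1$. Combining the three facts,
\begin{equation}
H_{\mathcal{F}}=\sum_{k=1}^{m}\sum_{v\in V_k^{\mathcal{F}}}f_vH(\mathcal{P}_{\mathcal{F}v})\le\sum_{k=1}^{m}\log(k)\sum_{v\in V_k^{\mathcal{F}}}f_v\le\sum_{k=1}^{m}\log(k)=\log(m!),
\end{equation}
which is the claim. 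I expect the main obstacle to be not any deep idea but the bookkeeping of the preceding paragraph: verifying that restricting the sender sets to $\mathcal{F}$ preserves both the inclusion and its strictness, and that the normalization underlying $\sum_{v\in V_k^{\mathcal{F}}}f_v\le1$ matches the one used in the theorem. Since the corollary asserts only an inequality, no analysis of the equality case (and hence no characterization of the extremal restricted networks) is required.
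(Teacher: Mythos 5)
Your proof is correct and takes essentially the same approach as the paper, whose own proof of the corollary is a single sentence instructing the reader to treat the probabilities $\mathcal{P}_{\mathcal{F}i}$ exactly as the probabilities $\mathcal{P}_i$ were treated in the proof of Theorem \ref{thm:network_best_NME}. The details you supply --- the restricted sets $V_k^{\mathcal{F}}$, the bound $H(\mathcal{P}_{\mathcal{F}i})\leq\log d_i$, the at-most-one-visit argument with strictness witnessed by $v_2$ (and the observation that intermediate nodes outside $\mathcal{F}$ do not disturb it), and $\sum_{v\in V_k^{\mathcal{F}}}f_v\leq 1$ --- are precisely the transcription the paper intends.
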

%
To deduce the corollary, we treat the probabilities $\mathcal{P}_{\mathcal{F}i}$ in the same fashion as we treated the probabilities $\mathcal{P}_i$ in the proof of Theorem \ref{thm:network_best_NME}. 

\subsection{Dissipation in a network can be bounded given the flows on the network}

The dissipation, $D$, for a network is a function both of its conductances $\kappa_{ij}$ and its flows $q_{ij}$. However, we can bound the dissipation based on the $q_{ij}$, alone, given only the constraint that $\sum_{ij} \kappa_{ij}^\gamma=C$.

\begin{thm} \textbf{Murray's law}. Let $q_{ij}$ be a network of flows, then if $\sum_{ij} \kappa_{ij}^\gamma=C$, the smallest possible dissipation in the network is: $\frac{\left(\sum_{ij}q_{ij}^{\frac{2\gamma}{\gamma+1}}\right)^{1+\frac{1}{\gamma}}}{C^{\frac{1}{\gamma}}}$~. \label{thm:Murray}
\end{thm}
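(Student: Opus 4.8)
The plan is to treat Theorem~\ref{thm:Murray} as a constrained minimization of the dissipation $D(\kappa)=\sum_{ij} q_{ij}^2/\kappa_{ij}$ over the conductances $\kappa_{ij}\ge 0$, holding the flows $q_{ij}$ fixed and subject to the material constraint $\sum_{ij}\kappa_{ij}^\gamma = C$. First I would dispose of the edges carrying no flow: if $q_{ij}=0$ the term $q_{ij}^2/\kappa_{ij}$ vanishes while $\kappa_{ij}^\gamma$ only consumes material, so the optimum assigns such edges zero conductance and they drop out of both sums (they also contribute $0$ to $\sum_{ij} q_{ij}^{2\gamma/(\gamma+1)}$). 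Restricting to the flow-carrying edges, where any minimizer necessarily has $\kappa_{ij}>0$ since $D\to\infty$ as any such $\kappa_{ij}\to 0$, reduces the problem to one with a strictly positive feasible interior and a compact constraint surface, so the infimum is attained.

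The cleanest route to both the optimal value and its optimality is H\"older's inequality rather than a bare Lagrange-multiplier computation, because for $0<\gamma<1$ the constraint surface $\sum_{ij}\kappa_{ij}^\gamma=C$ is not convex, so a critical point would not automatically be a global minimum. The key algebraic observation is the pointwise factorization
\[
q_{ij}^{\frac{2\gamma}{\gamma+1}} = \left(\frac{q_{ij}^2}{\kappa_{ij}}\right)^{\frac{\gamma}{\gamma+1}}\left(\kappa_{ij}^\gamma\right)^{\frac{1}{\gamma+1}},
\]
valid for every $\kappa_{ij}>0$, whose two exponents $\gamma/(\gamma+1)$ and $1/(\gamma+1)$ sum to $1$. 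Applying H\"older with the conjugate pair $p=(\gamma+1)/\gamma$ and $p'=\gamma+1$ then gives
\[
\sum_{ij} q_{ij}^{\frac{2\gamma}{\gamma+1}} \le \left(\sum_{ij}\frac{q_{ij}^2}{\kappa_{ij}}\right)^{\frac{\gamma}{\gamma+1}}\left(\sum_{ij}\kappa_{ij}^\gamma\right)^{\frac{1}{\gamma+1}} = D^{\frac{\gamma}{\gamma+1}}\,C^{\frac{1}{\gamma+1}}.
\]
Writing $S \equiv \sum_{ij} q_{ij}^{2\gamma/(\gamma+1)}$ and rearranging, this isolates $D \ge S^{1+1/\gamma}/C^{1/\gamma}$, which is exactly the claimed lower bound.

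To finish I would verify the bound is achieved, which follows from the equality case of H\"older: equality holds precisely when the two bracketed sequences are proportional, i.e.\ when $q_{ij}^2/\kappa_{ij}\propto\kappa_{ij}^\gamma$, equivalently $\kappa_{ij}\propto q_{ij}^{2/(\gamma+1)}$. Concretely I would set $\kappa_{ij}=\lambda\,q_{ij}^{2/(\gamma+1)}$, fix $\lambda$ by imposing $\sum_{ij}\kappa_{ij}^\gamma=C$, and substitute back to confirm that $D$ equals the stated expression. The only real subtlety is the bookkeeping around zero-flow edges together with attainment of the infimum; once those are handled, H\"older does all the work in a single step and simultaneously delivers the optimal-conductance (Murray) scaling law. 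A Lagrange-multiplier derivation, in which $\partial_{\kappa_{ij}}\bigl(D-\lambda\sum_{ij}\kappa_{ij}^\gamma\bigr)=0$ yields the same relation $q_{ij}^2=\mu\,\kappa_{ij}^{\gamma+1}$, can be offered as independent motivation for the scaling, but it is the H\"older inequality that certifies global optimality.
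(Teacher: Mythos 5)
Your proposal is correct, and it reaches the theorem by a genuinely different route than the paper. The paper's proof is the classical Lagrange-multiplier computation: it restricts to flow-carrying edges, notes that the feasible set $\{\kappa_{ij}\geq 0,\ \sum\kappa_{ij}^{\gamma}=C\}$ is compact and that $D\to\infty$ as any flow-carrying $\kappa_{ij}\to 0$, so the minimum exists at an interior point; stationarity of $\sum_{ij} q_{ij}^{2}/\kappa_{ij}+\lambda(\sum_{ij}\kappa_{ij}^{\gamma}-C)$ then yields $\kappa_{ab}\propto q_{ab}^{2/(\gamma+1)}$, the constant is fixed by the constraint, and back-substitution gives the stated value. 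You instead prove the lower bound directly via H\"older with conjugate exponents $(\gamma+1)/\gamma$ and $\gamma+1$ applied to the factorization $q_{ij}^{2\gamma/(\gamma+1)}=\left(q_{ij}^{2}/\kappa_{ij}\right)^{\gamma/(\gamma+1)}\left(\kappa_{ij}^{\gamma}\right)^{1/(\gamma+1)}$, and certify attainment through the equality case, which recovers the same Murray scaling $\kappa_{ij}\propto q_{ij}^{2/(\gamma+1)}$; I verified the exponent bookkeeping and the substitution $\lambda=(C/S)^{1/\gamma}$, and both check out. What each buys: your argument is a one-step global certificate that needs no compactness, no interiority, and no implicit claim that the stationary point is the unique critical point --- a genuine advantage here, since for $0<\gamma<1$ the constraint surface bounds a non-convex region, so stationarity alone does not certify a global minimum (the paper's existence-plus-interior-critical-point argument does close this gap, but only tacitly, via uniqueness of the solution to the stationarity equations). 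The paper's route, in exchange, is more transparent as physical motivation and directly exhibits the optimal conductance assignment, which is the form in which Murray's law is used elsewhere in the paper. Your handling of zero-flow edges is also sound and slightly more careful than the paper's brief aside: such edges contribute nothing to $D$ or to $\sum_{ij}q_{ij}^{2\gamma/(\gamma+1)}$, and since restricting the constraint sum to flow-carrying edges only decreases it, the H\"older bound holds against the full budget $C$ regardless of how material is wasted on dead edges. One cosmetic point: for flows of either sign you should read $q_{ij}^{2\gamma/(\gamma+1)}$ as $\left(q_{ij}^{2}\right)^{\gamma/(\gamma+1)}$, as the paper implicitly does.
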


This Theorem is equivalent to Murray's law \cite{chang2018minimal}: it is based on assigning each edge the conductance that minimizes the overall network dissipation.

\begin{proof}
Fixing flows, we minimize the total dissipation over conductances obeying the building constraint $\sum \kappa_{ij}^\gamma=C$. That is we minimize the overall function:
\begin{equation}
D(\kappa_{ij}) = \sum_{ij} \frac{q_{ij}^2}{\kappa_{ij}} + \lambda \left(\sum_{ij} \kappa_{ij}^\gamma - C \right)    
\end{equation}
where the Lagrange multiplier $\lambda$ maximizes the dissipation and we restrict to edges on which $q_{ij}\neq 0$. The minimization of $D$ is performed on a compact set ($\kappa_{ij}\geq 0$ and $\sum_{ij} \kappa_{ij}^\gamma = C$) so the minimum certainly exists. Since $D\to \infty$ whenever $\kappa_{ij}=0$ so the optimal value of $D$ occurs at an interior point within this set. So at the minimum point:
\begin{equation}
    0 = \frac{\partial D}{\partial \kappa_{ab}} = -\frac{q_{ab}^2}{\kappa_{ab}^2} + \lambda \gamma \kappa_{ab}^{\gamma-1}
\end{equation}
solving this equation yields $\kappa_{ab} \propto q_{ab}^{\frac{2}{\gamma+1}}$ (Murray's law), and we find our constant of proportionality by imposing the constraint $\sum_{ab}\kappa_{ab}^\gamma = C$:
\begin{equation}
    \kappa_{ab} = C^{1/\gamma}\frac{q_{ab}^{\frac{2}{\gamma+1}}}{\left(\sum_{ij} q_{ij}^{\frac{2\gamma}{\gamma+1}}\right)^{1/\gamma}}~.
\end{equation}
Substituting for $\kappa_{ab}$ in the dissipation yields the required inequality.
\end{proof}


\subsection{Strong nodes and path-like networks}
Our main results will concern networks that are close to paths; for example path networks that have low conductance excursions adjoined to some of the path nodes. How much do these additions affect the network's mixing entropy? Thinking more generally, we consider networks in which some edges are strong, and others are weak (we will define strong and weak) and bound the contribution of the weak nodes to the network entropy.

\begin{defn}
\label{def:strong-node-def}Let $1,2,\ldots,N$ be a labelling of
the nodes in the network $G$ in decreasing order of total flow $f_{i}$
(that is; $f_{1}\geq f_{2}\geq\cdots\geq f_{N}$). Select $0<\delta<1$
which will be referred to as the \textbf{dominance factor}. Let $k=\min\{i:\delta f_{i}>f_{i+1}\}$. The nodes $1,2,\ldots,k$ are referred to as
the \textbf{strong nodes above dominance factor $\delta$}, denoted
$\mathcal{F}_{\delta}$.
\end{defn}
\begin{thm}
\label{thm:strong-node-theorem}Let $\epsilon>0$ and let $q_{ij}$
be a flow network on an ambient network $G$ with nodes $\mathcal{N}$. Then there exists $\delta>0$, \textbf{depending only on $\boldsymbol{G}$}, such that  $\left|H-H_{\mathcal{F_\delta}}\right|<\epsilon$. In addition, $\delta$ may be chosen such that for each node $i\in\mathcal{F}_\delta$ the nodes adjacent to the largest magnitude in-flow at $i$ and the
largest magnitude out-flow at $i$ are also strong nodes. That is,
if $u,v\in n(i)$ are such that $q_{ui}=\max_{j\in n(i)}q_{ji}$ and
$q_{iv}=\max_{j\in n(i)}q_{ij}$ then $u,v\in\mathcal{F}_\delta$.
\end{thm}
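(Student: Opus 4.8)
The plan is to split the defect $H-H_{\mathcal F_\delta}$ into a contribution from discarding the weak receiving nodes and a contribution from restricting each strong node's incoming distribution to strong senders, and to show that the defining ``gap'' at the index $k$ forces both to be uniformly (flow-independently) small once $\delta$ is small. Writing $\mathcal F=\mathcal F_\delta=\{1,\dots,k\}$, $D$ for the maximum degree of $G$, $N=|\mathcal N|$, and using the standing normalization (unit boundary inflow, so $f_i\le 1$ for every $i$), I would begin from
\begin{equation}
H-H_{\mathcal F}=\sum_{i>k} f_i H(\mathcal P_i)+\sum_{i\le k} f_i\bigl(H(\mathcal P_i)-H(\mathcal P_{\mathcal F i})\bigr).
\end{equation}
For the first (\emph{weak-receiver}) sum I would control the total weak flow: since $k$ is the first index with $\delta f_k>f_{k+1}$ and the $f_i$ are nonincreasing, every weak node satisfies $f_j\le f_{k+1}<\delta f_k$, so $\sum_{j>k} f_j<(N-k)\delta f_k\le N\delta f_k\le N\delta$. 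As $H(\mathcal P_i)\le\log N$, this sum is at most $N\log N\,\delta$. The essential point is already visible here: the gap supplies the extra factor $f_k$, which is what will make the estimate uniform in the flow.

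For the second (\emph{restriction}) sum I would fix a strong node $i$ and set $N_i=\sum_{j:\tilde q_{ji}>0}\tilde q_{ji}$ and $M_i=\sum_{j\in\mathcal F:\tilde q_{ji}>0}\tilde q_{ji}$, so that $\mathcal P_i$ and $\mathcal P_{\mathcal F i}$ are the full and the strong-conditioned incoming distributions. The weak mass is $N_i-M_i=\sum_{j>k}\tilde q_{ji}\le\sum_{j>k} f_j<N\delta f_k$ (using $\tilde q_{ji}=P_{ji}f_j\le f_j$), while the strong mass is bounded below by $M_i\ge\tilde q_{ii}=P_{ii}f_i\ge f_i\ge f_k$ (since $P_{ii}\ge 1$ and $i\le k$). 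Hence the weak fraction $\rho_i:=(N_i-M_i)/N_i<N\delta$, \emph{independently of the flow and of $i$}. The grouping identity for Shannon entropy applied to the partition of the senders into strong and weak then gives
\begin{equation}
H(\mathcal P_i)=H_b(\rho_i)+(1-\rho_i)H(\mathcal P_{\mathcal F i})+\rho_i\,H(\mathcal P_i^{\mathrm{weak}}),
\end{equation}
with $H_b(\rho)=-\rho\log\rho-(1-\rho)\log(1-\rho)$ and $\mathcal P_i^{\mathrm{weak}}$ the weak-conditioned distribution, so that $|H(\mathcal P_i)-H(\mathcal P_{\mathcal F i})|\le H_b(\rho_i)+2\rho_i\log N\le H_b(N\delta)+2N\delta\log N$ once $N\delta\le 1/2$. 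Summing against $f_i\le1$ over the at most $N$ strong nodes and combining with the first sum yields
\begin{equation}
\bigl|H-H_{\mathcal F_\delta}\bigr|<N\log N\,\delta+N H_b(N\delta)+2N^2\delta\log N,
\end{equation}
whose right-hand side depends only on $N$ and tends to $0$ as $\delta\to0^+$; choosing $\delta$ small enough (a function of $N$ and $\epsilon$ only) makes it $<\epsilon$.

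For the second assertion I would additionally require $\delta<1/D$. If $i$ is a strong interior node, its strongest incoming edge carries $q_{ui}\ge f_i/D$ (the largest of at most $D$ positive inflows summing to $f_i$), and since $q_{ui}>0$ is an out-edge of $u$ we get $f_u\ge q_{ui}\ge f_i/D\ge f_k/D$; the same argument with the strongest out-edge gives $f_v\ge f_k/D$. But every weak node has flow $\le f_{k+1}<\delta f_k<f_k/D$, so $f_u,f_v>f_{k+1}$, forcing $u,v\in\mathcal F_\delta$. The finitely many boundary (source/sink) nodes carry the maximal through-flow and are automatically strong, as are their strongest-flow neighbours by the same $1/D$-fraction estimate, so the claim holds for all strong $i$. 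Taking $\delta$ to be the minimum of $1/D$, $1/(2N)$ and the threshold from the entropy estimate — all depending only on $G$ and $\epsilon$ — completes the proof.

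I expect the main obstacle to be the restriction term. The crux is recognizing that the defining gap $f_{k+1}<\delta f_k$, rather than merely the statement that individual weak flows are small, is exactly what bounds the \emph{relative} weak mass $\rho_i<N\delta$ uniformly: without the factor $f_k$ the weak fraction would degrade whenever the smallest strong flow $f_k$ is itself small, and $\delta$ could not then be chosen independently of the flow. Once $\rho_i$ is the quantity being controlled, the grouping identity converts the relative-mass bound into an entropy bound almost mechanically, so identifying $\rho_i$ (not the absolute weak mass) as the right object is the heart of the argument.
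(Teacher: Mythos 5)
Your proof is correct and follows essentially the same route as the paper's: the same triangle-inequality decomposition into a weak-receiver term and a strong-node restriction term, the same exploitation of the dominance gap $f_{k+1}<\delta f_k$ (with the denominator bounded below via $N_i\geq \tilde q_{ii}=f_i\geq f_k$) to make the weak-mass bound uniform in the flow, and the same $1/D$-fraction argument with $\delta<1/D$ for the strongest-neighbour claim. The only difference is that where the paper notes each omitted weak atom has mass $<\delta$ and then appeals to uniform continuity of entropy on the simplex, you make that step explicit and quantitative via the grouping identity, obtaining the bound $H_b(N\delta)+2N\delta\log N$ --- same idea, just with an explicit modulus.
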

\begin{proof}
Let $\delta>0$ be a dominance factor and, for shorthand, take $\mathcal{F}=\mathcal{F}_{\delta}$
to be the strong nodes in $\mathcal{N}$ over dominance factor $\delta$.
By the triangle inequality, we bound the difference
\begin{equation}
\left|H-H_{\mathcal{F}_\delta}\right|  \leq  \sum_{i\in\mathcal{F}}f_{i}\left|H_i-H_{\mathcal{F}_\delta i}\right|+\left|\sum_{i\not\in\mathcal{F}_\delta}f_{i}H_{i}\right|.
\end{equation}
First we bound the first sum on the right-hand side, a sum over the absolute difference between the different mixing entropies. Let $i\in\mathcal{F}_\delta$ and $j\not\in\mathcal{F}$. Then 
\begin{equation}
\mathcal{P}_{i}(j)  =  \frac{\tilde{q}_{ji}}{\sum_{k\not\in\mathcal{F}_\delta}\tilde{q}_{ki}+\sum_{k\in\mathcal{F}_\delta}\tilde{q}_{ki}}<\frac{f_{i}\delta}{f_{i}}<\delta.
\end{equation}
$\mathcal{P}_{\mathcal{F}_\delta i}$ is obtained by omitting fewer than $N=|\mathcal{N}|$ states from $\mathcal{P}_i$, each with probability less than $\delta$, and then renormalizing to give
a new probability distribution. Since entropy is uniformly continuous
on the simplex $\left\{ \sum_{i=1}^{N}p_{i}=1,p_{i}\geq0\right\} $, we can choose $\delta$ so that $\left|H_{i}-H_{\mathcal{F}i}\right|<\frac{\epsilon}{2N}$
so $\sum_{i\in\mathcal{F}}f_{i}\left|H_{i}-H_{\mathcal{F}i}\right|<\frac{\epsilon}{2}$.

We now bound the magnitude of the second term on the right-hand side.
$H_{\mathcal{F}_\delta i}$ is an entropy of a random variable
taking on less than $N=|\mathcal{N}|$ values. Therefore $\left|H_{\mathcal{F}i}\right|<\log N$.
The total flow through each node, $f_{i}\leq\delta$
for all $i\not\in\mathcal{F}_\delta$. Hence, we have
\begin{equation}
\left|\sum_{i\not\in\mathcal{F}_\delta}f_{i}H_{i}\right|  <  \sum_{i\not\in\mathcal{F}_\delta}\delta\log N<N\,\delta\log N.
\end{equation}
 And so we can choose $\delta$ so that the second term is bounded
by $\frac{\epsilon}{2}$. To complete the proof, note that the magnitudes
of the largest in- and out-flows are $\geq\frac{f_{i}}{\deg i}\geq \frac{f_i}{K}$ where $K$ is the largest degree
of a node in $G$. 
Thus, so long as $\delta<\frac{1}{K}$ the nodes connected to the largest in-
and out-flows of degree $i$ have total flows $>\frac{1}{K}f_{i}>\delta f_{i}$ meaning they are also strong nodes.
\end{proof}
We refer to the network formed by linking the nodes $\mathcal{F}_{\delta}$ up using the edges carrying the maximum inflow and
outflow at each node as the strong network, and re-use notation by using $\mathcal{F}_{\delta}$ to represent the strong network.

\subsection{Optimization of $\Theta$ over paths}

Anticipating our proof in Section \ref{sec:maintheorem} that optimal networks are paths for sufficiently small $\gamma$, we start by restricting our optimization to paths. When restricted to path networks $\Theta(c) = \min_m (-H(\tau_m) + cD(\tau_m))$, i.e. $\Theta$ is the lower envelope of straight lines. We first ask, if $c$ is varied, does the sequence of $\Theta$-minimizing paths always recapitulate Fig. \ref{fig:triangle_1src1snk_5_q}; i.e. start with a tour (at vanishingly small $c$) and end at large, finite $c$ with a geodesic, with the intermediate states being paths whose length increases by 1, at finite and predictable $c$ values. We can rationalize this sequence as follows: For a uniform conductance path of length $m$ each edge carries flow 1, and has conductance $(C/(m-1))^{1/\gamma}$, so the total dissipation is $D(\tau_m) = C^{-1/\gamma}(m-1)^{1+1/\gamma}$, which increases monotonically in $m$. Increasing $c$ increases the relative strength of dissipation to mixing in $\Theta$. Mixing favors tours and, more generally, paths that visit as many nodes as possible, while dissipation favors shorter paths. At each $c$, the optimal path length emerges from the balance of these two competing effects.

 Two paths of different lengths: $\tau_m$ and $\tau_n$, give rise to straight lines $c\mapsto -H(\tau_m) + cD(\tau_m)$ and $c\mapsto -H(\tau_n)+cD(\tau_n)$, with different slopes. Denote the point of intersection between the lines by $c_{m,n}$:
\begin{equation}
K^{-\frac{1}{\gamma}}c_{m,n} = \frac{\log(n!)-\log(m!)}{(n-1)^{1+\frac{1}{\gamma}}-(m-1)^{1+\frac{1}{\gamma}}} = \frac{\log(\Gamma(n+1))-\log(\Gamma(m+1))}{(n-1)^{1+\frac{1}{\gamma}}-(m-1)^{1+\frac{1}{\gamma}}}~.
\end{equation}

\begin{lemma}
The point of intersection $c_{m,n}$ is monotonic decreasing in both $m$ and $n$, for $m, n\geq 2$. \label{lemma:cmnmonotone}
\end{lemma}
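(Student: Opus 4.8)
The plan is to recognize $c_{m,n}$ as a constant multiple of a divided difference of two smooth functions, reduce monotonicity to a single-variable statement via symmetry and Cauchy's mean value theorem, and close with classical bounds on the digamma and trigamma functions. Write $s=1+\tfrac1\gamma$, so that $s>2$ (hence $s-1>1$) since $0<\gamma<1$, and set $g(x)=\log\Gamma(x+1)$, $h(x)=(x-1)^{s}$. The displayed formula then reads
\[
K^{-1/\gamma}c_{m,n}=R(m,n):=\frac{g(n)-g(m)}{h(n)-h(m)},
\]
and I would extend $m,n$ to real arguments in $[2,\infty)$ (the paper already writes $\log\Gamma$ in anticipation of this). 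Since $R$ is symmetric in its two arguments, it suffices to prove $\partial R/\partial n<0$ for every $n\neq m$; symmetry then yields monotonicity in $m$, and restricting to integers recovers the stated claim.

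First I would differentiate. The sign of $\partial R/\partial n$ equals the sign of $g'(n)\big[h(n)-h(m)\big]-\big[g(n)-g(m)\big]h'(n)$. Because $h$ is strictly increasing on $(1,\infty)$ (so $h'>0$), dividing by $h'(n)\big[h(n)-h(m)\big]$ shows this sign equals that of $\phi(n)-R(m,n)$ when $n>m$, and its negative when $n<m$, where $\phi:=g'/h'$. By Cauchy's mean value theorem, $R(m,n)=\phi(\xi)$ for some $\xi$ strictly between $m$ and $n$. Hence, provided $\phi$ is strictly decreasing on $[2,\infty)$, in the case $n>m$ we get $\phi(n)<\phi(\xi)=R(m,n)$ and in the case $n<m$ we get $\phi(n)>\phi(\xi)=R(m,n)$; the sign bookkeeping above then gives $\partial R/\partial n<0$ in both cases.

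The crux is therefore to prove $\phi=g'/h'$ strictly decreasing. Writing $\psi=(\log\Gamma)'$ for the digamma function, we have $g'(x)=\psi(x+1)>0$ and $h'(x)=s(x-1)^{s-1}$, so (dividing by $g'h'>0$) the condition $\phi'<0$ is equivalent to $g''/g'<h''/h'$, i.e.
\[
\frac{\psi'(x+1)}{\psi(x+1)}<\frac{s-1}{x-1}.
\]
Since $s-1>1$, it suffices to prove the stronger, $\gamma$-independent bound $(x-1)\psi'(x+1)<\psi(x+1)$ for $x\ge 2$. Here I would invoke the standard integral-comparison estimates $\psi'(x+1)=\sum_{j\ge1}(x+j)^{-2}<\int_0^\infty(x+t)^{-2}\,dt=1/x$ and $\psi(x+1)=\psi(x)+1/x>\log x$ (using $\psi(x)>\log x-1/x$), which reduce the claim to the elementary inequality $\tfrac{x-1}{x}<\log x$, valid for all $x>1$ because the difference vanishes at $x=1$ and has positive derivative $(x-1)/x^2$. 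Chaining $(x-1)\psi'(x+1)<\tfrac{x-1}{x}<\log x<\psi(x+1)$ establishes that $\phi$ is strictly decreasing and completes the proof.

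I expect the main obstacle to be this digamma/trigamma inequality: the Cauchy-mean-value reduction is routine, but one must secure bounds on $\psi$ and $\psi'$ that are sharp enough and hold uniformly on the entire range $x\ge2$, not merely asymptotically. The key simplification is the observation that the hardest case is the boundary $\gamma\to1^-$ (i.e. $s-1=1$), which collapses the two-parameter inequality to the single clean bound $(x-1)\psi'(x+1)<\psi(x+1)$ and makes the estimate tractable.
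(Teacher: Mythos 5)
Your proof is correct, and at its core it establishes the same fact as the paper's proof, but via a different and somewhat cleaner decomposition. The paper substitutes $x_m=(m-1)^{1+1/\gamma}$ so that $c_{m,n}$ becomes a secant slope of the single function $f(x)=\log\Gamma\bigl(2+x^{\gamma/(\gamma+1)}\bigr)$, and then proves $f$ is concave by explicitly computing $f''$ through the combination $u''u-(u')^2$, reducing to the sign of $\gamma\Psi_1-x^{-\gamma/(\gamma+1)}\Psi_0$. You instead keep the divided-difference form $R(m,n)=\bigl(g(n)-g(m)\bigr)/\bigl(h(n)-h(m)\bigr)$ and invoke Cauchy's mean value theorem together with monotonicity of $\phi=g'/h'$; since $\phi$ decreasing is exactly concavity of $g\circ h^{-1}$, which is the paper's $f$, the two arguments are mathematically dual. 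Notably, after you discard the factor $s-1>1$ (the paper's analogous move is bounding $\gamma\Psi_1\leq\Psi_1$), both proofs land on the identical final inequality $(x-1)\psi'(x+1)<\psi(x+1)$, i.e. $(z-2)\Psi_1(z)<\Psi_0(z)$ in the paper's variable $z=x+1$. What your route buys: it avoids the second-derivative algebra entirely, and your closing bounds are more self-contained --- $\psi'(x+1)<1/x$ by direct sum--integral comparison rather than the cited trigamma bound $\Psi_1(z)\leq 1/z+1/z^2$ (which is in fact slightly sharper than yours, but unnecessarily so), plus the same classical $\psi(x)>\log x-1/x$. What the paper's route buys: strict concavity of $f$ makes secant slopes strictly decreasing in each endpoint automatically, including through the degenerate case where the endpoints coincide. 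In your version this is a one-line addendum you should state explicitly: your derivative computation gives strict decrease of $R(m,\cdot)$ separately on $[2,m)$ and $(m,\infty)$, and to compare values across the diagonal (which the subsequent theorem's application needs, e.g. $c_{n,n-1}<c_{n-1,n-2}$ after using symmetry) one notes that $R(m,\cdot)$ extends continuously through $n=m$ with value $\phi(m)$, whence it is strictly decreasing on all of $[2,\infty)$. This is a cosmetic omission, not a gap; your sign bookkeeping in the two cases $n>m$ and $n<m$ is correct, as is the chain $(x-1)\psi'(x+1)<\tfrac{x-1}{x}<\log x<\psi(x+1)$ on $x\geq 2$.
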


\begin{proof}
Let $x_{m}=(m-1)^{1+\frac{1}{\gamma}}$, then: 
\begin{equation}
c_{m,n}  =  \frac{\log\left(\Gamma\left(2+x_{n}^{\frac{\gamma}{\gamma+1}}\right)\right)-\log\left(\Gamma\left(2+x_{m}^{\frac{\gamma}{\gamma+1}}\right)\right)}{x_{n}-x_{m}}.
\end{equation}
So $c_{m,n}$ is the slope of the secant from $\left(x_{m},f(x_{m})\right)$
to $\left(x_{n},f(x_{n})\right)$ where $f(x)=\log\left(\Gamma\left(2+x_{n}^{\frac{\gamma}{\gamma+1}}\right)\right)$.
Since $f$ is an increasing function we need to show it is concave
in order to show that these secant slopes decrease as either $x_{n}$
or $x_{m}$ increases.

Given $f=\log(u(x))$ where $u(x)=\Gamma\left(2+x^{\frac{\gamma}{\gamma+1}}\right)$
we have that 
\begin{equation}
f''(x)  =  \frac{u''(x)u(x)-(u'(x))^{2}}{(u(x))^{2}}~.
\end{equation}
To show that $f$ is concave, we then need that $u''u-(u')^{2}<0$. To compute these derivatives recall $\frac{d}{dx}\Gamma=\Gamma\Psi_{0}$
where $\Psi_{0}$ is the digamma function. The trigamma function $\Psi_{1}$
is defined to be $\Psi_{0}'$, and so $\frac{d^{2}}{dx^{2}}\Gamma  =  (\Psi_{0}^{2}+\Psi_{1})\Gamma$. Pulling these results together, we obtain:
\begin{equation}
u''u-u'^{2} =\gamma\left(\frac{1}{\gamma+1}\right)^{2}x^{-\frac{2}{\gamma+1}}\left(\gamma\Psi_{1}-x^{-\frac{\gamma}{\gamma+1}}\Psi_{0}\right)\Gamma^{2}
\end{equation}
Let $z=x^{\frac{\gamma}{\gamma+1}}+2$. Then $z$ is an increasing function of $x$ and visa versa. Since all of the other multipicative terms in the expression $u''u-u'^2$ are positive we only need to show that $\gamma\Psi_{1}-x^{-\frac{\gamma}{\gamma+1}}\Psi_{0}$ is negative for all $z= x_{n}^{\frac{\gamma}{\gamma+1}}+2=n-1+2\geq3$. We have 
\begin{eqnarray}
\gamma\Psi_{1}-x^{-\frac{\gamma}{\gamma+1}}\Psi_{0}   &\leq&  \Psi_{1}-x^{-\frac{\gamma}{\gamma+1}}\Psi_{0} = \Psi_{1}(z)-\frac{1}{z-2}\Psi_{0}(z)~,\nonumber \\ & \leq & \frac{1}{z}+\frac{1}{z^2}+\frac{1}{z(z-2)}-\frac{\log z}{z-2}~,\nonumber\\
& = & \left(1-\frac{2}{z^2}-\log z\right)\frac{1}{z-2}~.\nonumber
\end{eqnarray}
Here we made use of the inequalities \cite{laforgia2013exponential} $\Psi_0(z)\geq \log z -\frac{1}{z}$, and $\Psi_1(z)\leq \frac{1}{z}+\frac{1}{z^2}$ \cite{laforgia2013exponential} for all $z>0$. The last line is $<0$ for all $z\geq 3$, proving the lemma.
\end{proof}

\begin{thm}
When $\Theta$ is optimized among paths, on a triangular ambient grid with $n$ nodes, and $c$ is increased from 0, the optimal path decreases in length by 1 at predictable values of $c$: $c_{n,n-1}$, $c_{n-1,n-2}$, $c_{n-2,n-3}$ $\ldots$. That is: $\tau_n$ for $c<c_{n,n-1}$, $\tau_{n-1}$ for $c_{n,n-1}<c<c_{n-1,n-2}$, $\tau_{n-2}$ for $c_{n-1,n-2}<c<c_{n-2,n-3}$ and so on.
\end{thm}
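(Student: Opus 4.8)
The plan is to reduce the path-restricted problem to the lower envelope of a finite family of straight lines, and then read off the transitions from the monotonicity already proved in Lemma \ref{lemma:cmnmonotone}. First I would justify working only with the uniform paths $\tau_m$: for any path of length $m$ the mixing entropy equals $\log(m!)$ irrespective of its conductances (this is the equality case of Theorem \ref{thm:network_best_NME}, which depends only on the flow topology being a path on $m$ nodes), while by Murray's law (Theorem \ref{thm:Murray}), with every edge carrying unit flow the dissipation-minimizing conductances are uniform. Hence among all length-$m$ paths the cost $\Theta$ is minimized by $\tau_m$, and the path-restricted optimization becomes $\Theta(c) = \min_m L_m(c)$, where $L_m(c) = -\log(m!) + c\,D(\tau_m)$ and $m$ ranges over the attainable lengths from the geodesic $2N-1$ to the tour length $n$. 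Using $D(\tau_m) = C^{-1/\gamma}(m-1)^{1+1/\gamma}$, I would record the two monotonicities that drive everything: the intercept $-\log(m!)$ is strictly decreasing in $m$ while the slope $D(\tau_m)$ is strictly increasing in $m$. Consequently any two distinct lines $L_m,L_{m'}$ cross exactly once, at $c_{m,m'}=c_{m',m}>0$; and since the larger slope is paired with the smaller intercept, the longer of the two paths lies strictly below the shorter one for $c$ below the crossing and strictly above it for $c$ above the crossing.

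The second step is to order the relevant crossings. Since Lemma \ref{lemma:cmnmonotone} gives that $c_{m,n}$ is decreasing in each argument, decreasing both indices by one increases its value, so $c_{m+1,m}<c_{m,m-1}$ for every admissible $m$. Reading this along consecutive lengths yields a strictly increasing chain of breakpoints $c_{n,n-1}<c_{n-1,n-2}<c_{n-2,n-3}<\cdots$, which is exactly the partition of the $c$-axis asserted in the theorem. In particular no intermediate length is skipped: the interval $(c_{m+1,m},\,c_{m,m-1})$ on which $\tau_m$ is claimed optimal is always nonempty.

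The core step is to show that on $(c_{m+1,m},\,c_{m,m-1})$ the line $L_m$ lies strictly below every competitor $L_{m'}$, not merely below its two neighbors. I would split into $m'>m$ and $m'<m$. For $m'>m$ the crossing behavior above gives $L_m(c)<L_{m'}(c)$ precisely when $c>c_{m',m}$; monotonicity in the first argument gives $c_{m',m}\le c_{m+1,m}$, so $c>c_{m+1,m}$ already forces $c>c_{m',m}$. Symmetrically, for $m'<m$ one has $L_m(c)<L_{m'}(c)$ precisely when $c<c_{m,m'}$, and monotonicity in the second argument gives $c_{m,m'}\ge c_{m,m-1}$, so $c<c_{m,m-1}$ suffices. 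Combining the two cases shows $L_m=\min_{m'}L_{m'}=\Theta$ on the open interval, with equality to the neighboring line exactly at each endpoint. The tour $\tau_n$, having no longer competitor, is optimal on $[0,c_{n,n-1}]$, and the geodesic, having no shorter competitor, on the final unbounded interval.

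The main obstacle is precisely this global comparison: a priori one must rule out that some far-away length $\tau_{m'}$ dips below $L_m$ inside its nominal interval, which would break the clean decrease-by-one pattern. This is exactly where the full two-argument monotonicity of $c_{m,n}$ in Lemma \ref{lemma:cmnmonotone} is essential, since monotonicity in each index separately collapses the infinitely many pairwise comparisons to the two endpoint conditions used above. I would finish with the minor bookkeeping: that $m\ge 2$ throughout so Lemma \ref{lemma:cmnmonotone} applies, that a path of each intermediate length exists on the triangular grid so that every $L_m$ is realizable, and that the tour and geodesic endpoints carry one-sided intervals.
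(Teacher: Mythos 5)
Your proposal is correct and follows essentially the same route as the paper: both treat the path-restricted $\Theta$ as the lower envelope of the lines $-\log(m!)+cD(\tau_m)$ and invoke the two-argument monotonicity of $c_{m,n}$ from Lemma \ref{lemma:cmnmonotone} to order the pairwise crossings and conclude the length decreases by one at each breakpoint. Your write-up is somewhat more explicit than the paper's (spelling out the reduction to uniform-conductance paths via Theorem \ref{thm:Murray} and the two-sided comparison against all competitors $m'\gtrless m$, which the paper handles more tersely), but these are refinements of the same argument rather than a different approach.
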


\begin{proof}
The theorem follows directly from the monotonicity property proven in Lemma \ref{lemma:cmnmonotone}. We have already shown that the tour $\tau_n$ is the optimal path at $c=0$. As $c$ is increased, the line $\Theta(\tau_n)$ intersects with all lines $\Theta(\tau_m)$ for $m<n$, at $c_{n,m}$. Because of monotonicity, the smallest of these points of intersection is $c_{n,n-1}$. Thus $\tau_n$ is the optimal path for $c<c_{n,n-1}$. $\Theta(\tau_{n-1})$ intersects with all lines $\Theta(\tau_m)$ for $m<n$ at $c_{n-1,m}$. The first point of intersection is $c_{n-1,n-2}$. So $\tau_{n-1}$ is replaced by $\tau_{n-2}$, and in turn by $\tau_{n-3}$ and so on.
\end{proof}

\section{All optimal networks are paths for sufficiently small $\gamma$} \label{sec:maintheorem}
Now we prove that for $c_{m+1,m}<c<c_{m,m-1}$, networks with a unit source-sink
pair optimizing $H+cD$  are approximately paths of length $m$ in the limit as $\gamma\to 0$. Our proof works for any subinterval of $\left(c_{m+1,m},c_{m,m-1}\right)$. The parameter $0<\sigma<1$ represents the fraction of $[c_{m+1,m},c_{m,m-1}]$ covered by the subinterval. We can also represent the subinterval by $c_{m+1,m}+\rho<c<c_{m-1,m}-\rho$, where $\rho = \frac{1}{2}(1-\sigma) (c_{m-1,m}-c_{m+1,m})$.
\begin{thm}
\label{thm:main-theorem}
Let $G$ be an ambient network with a single unit-flow source and sink.
Let $m$ be a possible length of a path in $G$ connecting the source
to the sink. Let $0<\sigma<1$. We claim that there exists $\epsilon>0$ and $\Gamma>0$ such that if 
 $\delta>0$ and $\mathcal{F_{\delta}}$ is the network of strong nodes such that
$\left|H_{\mathcal{F}_{\delta}}-H\right|<\epsilon$ provided by by Theorem \ref{thm:strong-node-theorem}, and the material cost exponent $\gamma<\Gamma$, then for any $c_{m+1,m}+\rho<c<c_{m,m-1}-\rho$
the network $\mathcal{F}_\delta$ is a path of length $m$.
\end{thm}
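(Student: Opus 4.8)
The plan is to sandwich the optimal cost $\Theta(\kappa^\ast)$ (where $\kappa^\ast$ denotes the optimizer, so that $\mathcal{F}_\delta$ is its strong network) between an upper bound obtained by testing the path $\tau_m$ and a lower bound extracted from the strong network, and then to show the two bounds can only meet when $\mathcal{F}_\delta=\tau_m$. Write $g(\ell):=-\log(\ell!)+cD(\tau_\ell)=-\log(\ell!)+cC^{-1/\gamma}(\ell-1)^{1+1/\gamma}$ for the cost of a uniform path of length $\ell$. Since $\tau_m$ is feasible, optimality gives $\Theta(\kappa^\ast)\le g(m)$, and by Lemma \ref{lemma:cmnmonotone} with the path-optimization theorem, $g$ attains its minimum over admissible lengths at $\ell=m$ throughout $(c_{m+1,m},c_{m,m-1})$; the $\rho$-margin is what yields a strict gap $g(\ell)-g(m)\ge\Delta(\rho,\gamma)>0$ for every $\ell\ne m$.

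For the lower bound I would first invoke Theorem \ref{thm:strong-node-theorem}: fix $\epsilon$ and obtain $\delta$ (which I may freely shrink) so that $|H-H_{\mathcal{F}_\delta}|<\epsilon$, every strong edge carries flow at least $q_{\min}:=\delta^{N-1}/K$, and the flow diverted off a strong backbone is $O(\delta)$, so backbone edges carry flow at least $q_0:=1-NK\delta$ close to $1$. Let $k=|\mathcal{F}_\delta|$ and let $E_s$ be the number of strong edges. The two structural inputs are Corollary \ref{prop:Best-for-restricted}, $H_{\mathcal{F}_\delta}\le\log(k!)$ with equality only for a path, and Murray's law (Theorem \ref{thm:Murray}), which bounds $D(\kappa^\ast)\ge C^{-1/\gamma}\bigl(\sum_{ij}q_{ij}^{2\gamma/(\gamma+1)}\bigr)^{1+1/\gamma}$. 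The decisive observation is that, for each fixed positive flow, $q^{2\gamma/(\gamma+1)}\to1$ as $\gamma\to0$, so every strong edge contributes essentially $1$ to Murray's sum and $D(\kappa^\ast)\gtrsim C^{-1/\gamma}E_s^{1+1/\gamma}$ up to the frozen constant $q_{\min}^2$ (or $q_0^2$ on a backbone).

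Then I would run the argument in three steps. Edge count: if $E_s\ge m+1$ then $cD(\kappa^\ast)\ge q_{\min}^2 c_{m+1,m}C^{-1/\gamma}(m+1)^{1+1/\gamma}\sim q_{\min}^2\log(m+1)\,\bigl((m+1)/m\bigr)^{1+1/\gamma}\to\infty$, violating the bounded budget $cD(\kappa^\ast)\le g(m)+\log N!$ (from $-H\ge-\log N!$), so $E_s\le m$. Shape: the maximal-outflow edges route every strong node to the sink, so $\mathcal{F}_\delta$ is connected with $E_s\ge k-1$; a branched spanning tree would contain a zero-flow branch, impossible since strong edges carry flow $\ge q_{\min}>0$, so a tree must be a path, while any cycle forces $E_s\ge k$ together with a strict entropy deficit $H_{\mathcal{F}_\delta}<\log(k!)$. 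Length: once $\mathcal{F}_\delta$ is a path $\tau_k$, its backbone flows are $q_0\approx1$, so $H(\kappa^\ast)=\log(k!)+O(\epsilon)$ and $D(\kappa^\ast)\ge q_0^2 D(\tau_k)$, whence $\Theta(\kappa^\ast)\ge g(k)-\epsilon-(1-q_0^2)cD(\tau_k)$; comparing with $\Theta(\kappa^\ast)\le g(m)$ and the strict gap $\Delta(\rho,\gamma)$ excludes every $k\ne m$.

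The main obstacle is the uniform control of the dissipation bound as $\gamma\to0$, because raising Murray's sum to the power $1+1/\gamma$ amplifies both the useful signal and the error. Two features make it work and must be handled carefully: the quantifier order $\epsilon\mapsto\delta\mapsto\Gamma$, so $q_{\min}$ and $q_0$ are frozen positive constants before $\gamma$ is sent to $0$; and the fact that the integer edge-count gaps generate diverging factors $\bigl(\ell/(\ell-1)\bigr)^{1+1/\gamma}$ which dominate the fixed multiplicative losses $q_{\min}^2,q_0^2$ and the additive $\epsilon$. The tightest point is at the interval endpoints, where consecutive path costs nearly coincide: there the $\rho$-margin keeps $\Delta(\rho,\gamma)$ strictly positive, and one must check that the shrinkage $(1-q_0^2)cD(\tau_\ell)$ stays below $\Delta$ for $\ell=m\pm1$ so that both the too-long and too-short alternatives are excluded. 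Finally, ruling out the loopy strong networks cleanly requires pairing the strict entropy bound of Corollary \ref{prop:Best-for-restricted} with the edge-count penalty, since neither alone suffices.
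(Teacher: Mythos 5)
Your proposal is correct and follows essentially the same route as the paper's proof: an upper bound from the test path $\tau_m$, the strong network $\mathcal{F}_\delta$ with $\delta$ frozen before taking $\gamma\to 0$, Murray's law (Theorem \ref{thm:Murray}) forcing the dissipation of any strong network with more than $m-1$ edges to diverge as $\gamma\to 0$, and Corollary \ref{prop:Best-for-restricted} ruling out networks on fewer than $m$ nodes, so the strong network must be a path of length exactly $m$. The only differences are cosmetic: the paper streamlines the bookkeeping by normalizing $C=m-1$ (via the invariance argument of Section \ref{sec:Cinvariance}), so that $D(\tau_n)$ tends to $0$, $m-1$, or $\infty$ according as $n<m$, $n=m$, or $n>m$, and it handles your ``shape'' step with Euler's formula (the strong network has $n-1+F$ edges and, having no leaves besides source and sink, is a path iff $F=0$) rather than your tree/cycle case analysis and backbone-flow estimate $q_0$.
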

\begin{proof}
We can simplify the calculations in our proof by appealing to the result from Section \ref{sec:Cinvariance}, that the sequence of optimizers is identical as $c$ is varied for any value of $C$. Accordingly we consider the special case $C = m-1$. For this choice of material cost $C$, $D(\tau_n) = (m-1)^{-1/\gamma}(n-1)^{1+1/\gamma}$. Then the computation of $\Theta$ on paths is is drastically simplified:
\begin{equation}
D(\tau_n) \to \left\{\begin{array}{cc} 0 & \hbox{if~} n<m \\ m-1 & \hbox{if~} n=m \\ \infty & \hbox{if~} n>m~. \end{array} \right.
\end{equation}
The $\Theta$ loci of $\tau_{m-1}$, $\tau_m$, $\tau_{m+1}$ are shown in Fig. \ref{fig:thetaloci}. Define $\rho$ as above, for the fixed material cost $C=m-1$.
Let $\Gamma$ be such that we can choose $\epsilon$ with $\Theta(\tau_m)+\epsilon<\Theta(\tau_{m\pm 1})$ for all $c_{m+1,m}+\rho<c<c_{m,m-1}-\rho$ and $\gamma<\Gamma$. Let $q_{ij}$ be a flow on $G$ with the specified source and sink. According to Theorem \ref{thm:strong-node-theorem}, we can define $\delta>0$ a dominance factor and $\mathcal{F}_\delta$ a network of strong nodes such that $\left|H(\mathcal{F}_\delta) -H\left(q_{ij}\right)\right|<\epsilon$. $\mathcal{F}_\delta$ has no leaf nodes except, potentially, the source and the sink.
\begin{figure}
\begin{center}
\includegraphics[width=0.8\textwidth]{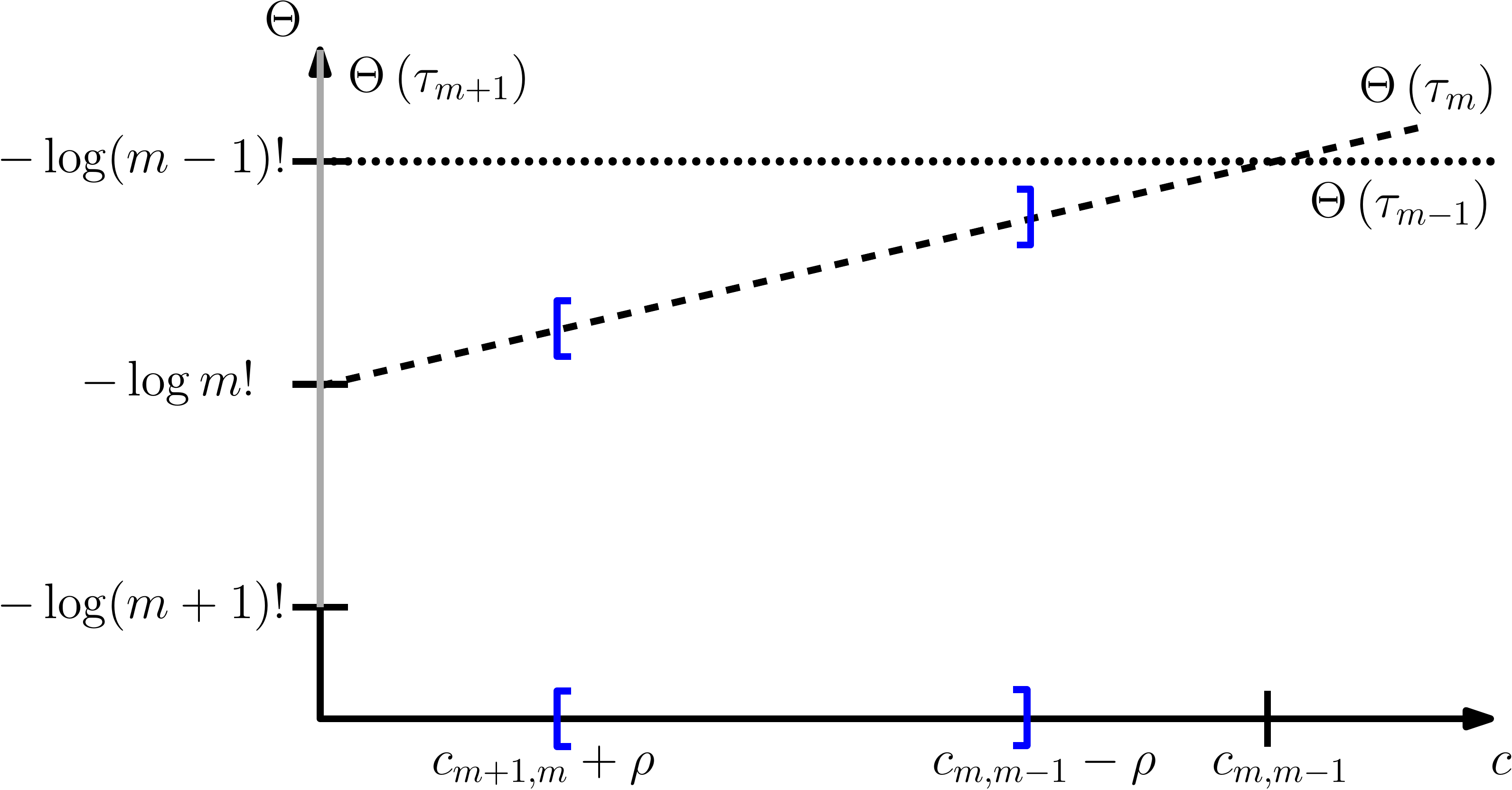}
\caption{$\Theta$-against-$c$ loci for the simple paths $\tau_{m-1}$ (dotted, black), $\tau_m$ (dashed, black), $\tau_m$ (solid, gray), in the limit as $\gamma\to 0$.} \label{fig:thetaloci}
\end{center}
\end{figure}
Suppose $\mathcal{F}_\delta$ contains $n$ nodes. Then by Euler's topological formula, it must contain $n-1+F$ edges, where $F$ is the number of faces in the network (given the constraints on $\mathcal{F}_\delta$, $F=0$ if and only if $\mathcal{F}_\delta$ is a path). Hence $-H(\mathcal{F}_\delta)\geq \log n!$. Each edge must carry, at minimum, flow $\delta^{n-1+F}$. Accordingly, the dissipation in the network can be bounded below by $D^*(n) = \delta^{2(n-1+F)} \left(\frac{n-1+F}{m-1}\right)^{1/\gamma}$ by Theorem \ref{thm:Murray}. $D^*(n)\to\infty$ as $\gamma\to 0$ if $n+F>m$. Since networks with bounded dissipation exist, the optimal network must have $n+F\leq m$. We can then compare the strong network with $\tau_{m-1}$, and $\tau_{m}$. If $n\leq m-1$, by Corollary \ref{prop:Best-for-restricted}, $\Theta(\mathcal{F}_\delta)\geq -H( \mathcal{F}_\delta)\geq -\log n! \geq -\log(m-1)! = \lim_{\gamma\to 0}\Theta (\tau_{m-1})$. This implies that $\Theta(q_{ij})>\Theta(\tau_{m})$, and so $q_{ij}$ is in fact sub-optimal. Therefore $n=m$. Since $n+F\leq m$, $F=0$, i.e. $\mathcal{F}_\delta$ is a path of length $m$.
\end{proof}
By choice of $\epsilon$, $\mathcal{F}_\delta$ can approximate $\mathcal{F}$ arbitrarily closely in $\Theta$. Since our convergence result can be made uniform in $\epsilon$ over the interval $c_{m+1,m}+\rho<c<c_{m,m-1}-\rho$, it follows that $\mathcal{F}$ converges to some path $\tau_m$, as $\gamma\to 0$, except possibly at the points $c_{m,m-1}$. Our proof method does not provide us with a way to prove convergence at these points, but based on our numerical simulations, we think it is likely that as $\gamma\to 0$ there are simply two optima, $\tau_{m}$ and $\tau_{m-1}$, with indistinguishable $\Theta$ values at these crossover $c$-values. 

\section{Discussion}
\label{sec:conclusions}

We introduced and analyzed theoretically and by numerical simulations two measures of mixing quality on networks, one measuring the diversity of places within a network that may be reached by cues originating within that network (sender entropy), and the other reflecting the diversity of cues that are received at each point within the network (receiver entropy). Happily, we were able to show that sender entropy for a network is equivalent to the receiver entropy on the same network if flows are reversed, allowing us to focus on optimizing just one kind of entropy within this paper. The mixing entropy quantifies the diversity of signals, which may include cues, genotypes and nutrients present at each point within the network: it is important to determine which type of mixing an network may be prioritizing before comparing it to theoretical calculations. Importantly, while at small $\gamma$ optimizing either entropy will produce identical networks, at biologically relevant values of $\gamma$, which type of mixing is most important to the network influences important features of its organization, such as the placement of loops. 

We proved that in the single source-single sink geometry, the optimal networks converge to simple paths joining source to sink, with the path length determined by the different priorities that the network gives to mixing (which favors long paths) and to dissipation (which favors short paths). Intriguingly, our numerical simulations suggest that there is a finite value of $\gamma$, which for our $5\times5$ ambient grids is approximately 0.45, at which optimal networks transition from loopy structures to simple path (see Fig. \ref{fig:gammavaries}). However, it is hard to guarantee that the network does not contain weak edges that do not show up in Fig. \ref{fig:triangle_1src1snk_5_q}. Our optimization algorithm enforces positivity of conductance on all edges, only to filter low conductance edges at the end. Accordingly it is not readily able to distinguish between small, but finite conductances that vanish only as $\gamma\to 0$, and a bifurcation that removes edges at a finite value of $\gamma$, so we must be cautious about interpreting the disappearance of loops at finite $\gamma$ as evidence of a phase transition in the network, analogous to the disappearance of loops at $\gamma=1$ for dissipation-minimizing networks \cite{bohn2007structure}. 

We have narrowly focused on the case where there is only one source and one sink within the network, allowing us to rigorously validate our numerical results. However, our numerical optimization method is equally applicable to networks with multiple sources and sinks, and it is possible to explore for example the conditions under which a network that is transporting material from a pair of sources to a pair of sinks, will determine to maintain two separate flows, or bring these flows together \cite{PhDThesis}. In particular, determining whether real network forming organisms such as fungi and slime molds have mixing-optimizing networks will require that we properly model the locations of the sources and sinks that drive their flows.

It is equally important when comparing optimal mixing networks with real biological networks to pin down the value of $\gamma$ for these networks. On theoretical grounds, we expect real biological networks in which vessels are simple tubes to operate in a range of $\gamma$ from $\gamma=1/4$ (when the cost of network upkeep is proportional to the surface area of its vessels) to $\gamma=1/2$ (when upkeep is proportional to vessel volume). Although direct measurement of $\gamma$ is impossible, the branching hierarchies of xylem vessels in plants and some levels of cellular tubes in the slime mold \textit{Physarum polycephalum} obey Murray's law (Theorem \ref{thm:Murray}) \cite{PlantTransportMcCullohAdler,akita2016experimental}. Specifically if $Q\propto \kappa^{\frac{\gamma+1}{2}}$, then $\sum \kappa^{\frac{\gamma+1}{2}}$ will be conserved between different levels of a hierarchical network. For simple tubes, we may assume the Hagen-Poiseuille law (that the conductance of a vessel and its radius, $a$ are related by $\kappa\propto a^4$), it follows that $\sum a^{2\gamma+2}$ is conserved. 

In \textit{P. polycephalum}, $\sum r^\alpha$ is conserved across different levels of the hierarchy, with a range of $\alpha$ values between $2.5-3.3$ \cite{akita2016experimental}, corresponding to $0.27<\alpha<0.65$. $\alpha$ values are similar for plants, but determining $\gamma$ from vessel radii is complicated by the fact that the xylem vessels (like the cords of mycorrhizal fungal networks) are constituted of many smaller tubes. Suppose these tubes have radius $A$ but individually obey the Hagen-Poiseuille law, then: $\kappa\propto a^2A^2$. The total number of tubes at the same level in the hierarchy is reported to increase by a factor $F\approx 1.2$ moving from larger to smaller tubes \cite{PlantTransportMcCullohAdler}. Accordingly, since $a$ decreases by a factor of $2^{-1/\alpha}$ when one tube splits into two then $A\propto a^{1-\frac{\alpha}{2}+\frac{\alpha}{2}\log_2F}$, and so $\alpha = (\gamma+1)\left(2-\frac{\alpha}{2}+\frac{\alpha}{2}\log_2F\right)$. We therefore estimate that the plants in \cite{PlantTransportMcCullohAdler} have $\gamma$ values ranging from  0.8 for \textit{Fraxinus pensylvanica} ($F=1.2$, $\alpha=2.2$) up to 1.4 for \textit{Campsis radicans} ($F=1.4$, $\alpha=3$). So slime molds span the value of $\gamma\approx 0.5$ at which our calculations show loops being eliminated from the optimal network, while plant networks sit high above this value.

The optimal networks shown in Figure \ref{fig:gammavaries} for $\gamma\gtrsim 0.5$, qualitatively resemble the real structures of migrating slime mold networks, in which densely interconnected `fans' of tubes are linked together by sparsely connected or even loopless networks (see e.g. Fig. 1 in \cite{alim2018fluid}).  In future work, we plan to analyze the optimal loopy networks found by our algorithm to determine why optimal mixing requires fans (loopy regions) don't appear throughout the network but are located only near the source, as well as to understand how the tradeoffs between mixing and dissipation can be used to predict the size of the fan relative to the total length of the network.

That real network forming organisms do not form tours may result from their $\gamma$ values being too high. However, even at low values of $\gamma$ networks face other tradeoffs, such as resistance to damage and or the need to minimize dissipation when the sources and sinks fluctuate in strength \cite{katifori2010damage}. An additional property that must be highlighted for organisms such as fungi and slime molds that have indeterminate growth is that organisms need to maintain their mixing while growth pushes sources and sinks ever further apart. A tour can be extended indefinitely to include to an ever increasing number of nodes by extending it node by node. However, this model of growth extends the network only by adding a single edge at a time, restricting growth to a single growing tip and is an inefficient strategy for a fungus or other foraging organism, that must compete for space and resources with other organisms. The type of network formed by a network is also shaped by the constraints on how it must form this network. Fast foraging may favor growth in multiple directions simultaneously, facilitated by the organism having multiple growing tips. Thus optimization principles such as those developed in this paper only achieve true biological relevance when linked to a set of rules that a growing organism can follow to attain the optima. Such rules have been only recently elucidated for dissipation minimizing networks (see e.g. \cite{hu2013adaptation}), leaving unmet the challenge of constructing rules to achieve more complex objectives, including mixing.

\section*{Acknowledgments}
We thank Karen Alim, Eleni Katifori and Sebastien Roch for many useful discussions at a sequence of Square Meetings hosted by the American Institute for Mathematics, where the idea for this project was developed.  

\appendix
\section{Computation of derivatives of $\Theta$} \label{sec:appendix_derivs}

To differentiate $\Theta$ we compute the all of intermediate variables appearing in Eq.(\ref{eq:composition}): i.e. $p_{i},q_{ij},f_{i},T_{ij},P_{ij},\tilde{q}_{ij}$ and
$N_{i}$. The pressures $p_{i}$ are first obtained by solving Eqn. \ref{eqn:Poiss_eqn}, using the Matlab function mldivide. We then solve a chain of equations to obtain the Lagrange multipliers: 
\begin{eqnarray}
 \xrightarrow{\frac{\partial\Theta}{\partial N_i}=0}  \alpha  
\xrightarrow{\frac{\partial\Theta}{\partial\tilde{q}_{ab}}=0} \gamma
\xrightarrow{\frac{\partial\Theta}{\partial P_{ab}}=0} \mu  \xrightarrow{\frac{\partial\Theta}{\partial T_{ab}}=0} \lambda
\xrightarrow{\frac{\partial\Theta}{\partial f_{a}}=0} \beta \xrightarrow{\frac{\partial\Theta}{\partial p_{a}}=0} \nu~.
\end{eqnarray}
First:
\begin{equation}
\frac{\partial\Theta}{\partial N_{a}}  =  \sum_{i}\left(-\frac{\tilde{q}_{ia}}{N_{a}^{2}}\log\left(\frac{\tilde{q}_{ia}}{N_{a}}\right)-\frac{\tilde{q}_{ia}}{N_{a}^{2}}\right)-\alpha_{a}=0.
\end{equation}
Second:
\begin{equation}
\frac{\partial\Theta}{\partial\tilde{q}_{ab}}  =  \frac{f_{b}}{N_{b}}\log\left(\frac{\tilde{q}_{ab}}{N_{b}}\right)+\frac{f_{b}}{N_{b}}+\alpha_{b}-\gamma_{ab}=0
\end{equation}
Third:
\begin{equation}
\frac{\partial\Theta}{\partial P_{ab}}  =  \mu_{ab}-\sum_{i}\mu_{ib}T_{ia}+\gamma_{ab}f_{a}=0
\end{equation}
so:
\begin{equation}
\gamma_{ab}  =  \left(\sum_{i}\mu_{ib}T_{ia}-\mu_{ab}\right)/f_{a}
\end{equation}
Fourth:
\begin{equation}
\frac{\partial\Theta}{\partial T_{ab}}  =  -\sum_{j\in\mathcal{N}}\mu_{aj}P_{bj}-\lambda_{ab}=0
\end{equation}
Fifth:
\begin{equation}
\frac{\partial\Theta}{\partial f_{a}}  =  \sum_{j:\tilde{q}_{ja}>0}\frac{\tilde{q}_{ja}}{N_{a}}\log\left(\frac{\tilde{q}_{ja}}{N_{a}}\right)+\sum_{j\in\mathcal{N}}\gamma_{aj}P_{aj}-\sum_{j\in n(i)}\lambda_{aj}\frac{q_{aj}\mathbf{1}_{q_{aj}>0}}{f_{a}^{2}}-\beta_{a}=0~.
\end{equation}
Sixth, to calculate $\frac{\partial \Theta}{\partial p_a}$ we make use of the results $\frac{\partial}{\partial p_{a}}q_{ai}=\kappa_{ai}$
and $\frac{\partial}{\partial p_{a}}q_{ia}=-\kappa_{ia}$. Thus: 
\begin{eqnarray}
\frac{\partial\Theta}{\partial p_{a}} & = & \sum_{j}\kappa_{aj}(\nu_{a}-\nu_{j})+\sum_{i\in n(a)}\left(\beta_{a}\kappa_{ai}\mathbf{1}_{q_{ai}>0}-\beta_{i}\kappa_{ai}\mathbf{1}_{q_{ia}>0}\right)\nonumber \\ & & +\sum_{i\in n(a)}\left(\lambda_{ai}\frac{\kappa_{ai}\mathbf{1}_{q_{ai}>0}}{f_{a}}-\lambda_{ia}\frac{\kappa_{ai}\mathbf{1}_{q_{ia}>0}}{f_{i}}\right) = 0.
\end{eqnarray}
Thus solving for the Lagrange multipliers $\nu_i$ requires
solving a Poisson equation on the network similar to Eqn. \ref{eqn:Poiss_eqn}.

\section{Finding adjacent flow topologies} \label{sec:appendix_ShermanMorrison}

We assume that the network of non-zero conductances
has a single connected component, because although very small conductances are treated as negligible throughout our algorithm, they remain large enough to keep the Laplacian rank complete. We take the inverse of the version of the Laplacian defined in Section \ref{sec:movement}, $\tilde{\Delta}_\kappa$ for the initial network. We compute the directions of flow on each edge within the network (edges with low flows are ignored). The set of networks with the same directions of flow constitutes one of the watersheds shown in Fig \ref{fig:mixinglandscape}. We systematically vary one conductance $\kappa_{ab}$ within the network to find an adjacent watershed -- i.e. a flow network in which some subset of the non-negligible flows have been reversed. We find the threshold values for $\kappa_{ab}$ at which one or more flow directions are reversed, by appealing to the Sherman-Morrison formula \cite{sherman1950adjustment} (we thank Eleni Katifori for bringing the S.M. formula to our attention). Specifically, if the conductance in edge $(a,b)$ is increased to $\kappa_{ab}+t$, then the Laplacian for the new network becomes
\begin{equation}
\tilde{\Delta}_{\tilde{\kappa}_{ij}}  =  \tilde{\Delta}_{\kappa_{ij}}+t(e_{a}-e_{b})(e_{a}-e_{b})^{T}.
\end{equation}
 Then the Sherman-Morrison formula yields
\begin{eqnarray}
\tilde{\Delta}_{\tilde{\kappa}_{ij}}^{-1} & = & \left(\tilde{\Delta}_{\kappa_{ij}}+t(e_{a}-e_{b})(e_{a}-e_{b})^{T}\right)^{-1}\nonumber\\
 & = & \tilde{\Delta}_{\kappa_{ij}}^{-1}-\frac{\tilde{\Delta}_{\kappa_{ij}}^{-1}t(e_{a}-e_{b})(e_{a}-e_{b})^{T}\tilde{\Delta}_{\kappa_{ij}}^{-1}}{1+(e_{a}-e_{b})^{T}\tilde{\Delta}_{\kappa_{ij}}^{-1}t(e_{a}-e_{b})}.
\end{eqnarray}

Given another edge $(u,v)$, We wish to find a perturbation
to $\kappa_{ab}$ such that the flow along $(u,v)$ is reversed. Let $R_{i}$ be the $i^{\text{th}}$ row of $\tilde{\Delta}_{\kappa_{ij}}^{-1}$,
and $d_{ij}$ be the $i,j$ entry of $\tilde{\Delta}_{\kappa_{ij}}$.
Then the pressure drop is given by:
\begin{equation}
\tilde{p}_{u}-\tilde{p}_{v}  =  \left(R_{u}-R_{v}\right)Q-\frac{t\left(d_{au}-d_{av}-d_{bu}+d_{bv}\right)(R_{a}-R_{b})Q}{1+t\left(d_{aa}-d_{ab}-d_{ba}+d_{bb}\right)}.
\end{equation}
Therefore the pressure drop is a monotonic function of $t$ so the
the zero of this equation is where the pressure reverses. Setting
the left side to $0$ we get the value $t=t_{abuv}$ at which flow reversal occurs:
\begin{equation}
t_{abuv} \frac{p_{u}-p_{v}}{\left(d_{au}-d_{av}-d_{bu}+d_{bv}\right)(p_{a}-p_{b})-\left(d_{aa}-d_{ab}-d_{ba}+d_{bb}\right)(p_{u}-p_{v})}~ \label{eq:flowreversal}.
\end{equation}

\bibliographystyle{siamplain}
\bibliography{ref}
\end{document}